\documentclass[11pt, letterpaper]{article}

\usepackage{comment}
\usepackage{tcolorbox}
\usepackage{thm-restate}
\usepackage{subcaption}

\usepackage{hchang}

\usepackage{booktabs} % book-quality tables

\usepackage{lipsum}   % filler text
\usepackage{fancyvrb} % extended verbatim environments
  \fvset{fontsize=\normalsize}% default font size for fancy-verbatim environments
\usepackage{cleveref}

\newtheorem{theorem}{Theorem}[section]
\newtheorem{lemma}[theorem]{Lemma}
\newtheorem{claim}[theorem]{Claim}
\newtheorem{observation}[theorem]{Observation}

\newtheorem{corollary}[theorem]{Corollary}

\newcommand{\cC}{\ensuremath{\mathcal{C}}}
\newcommand{\cR}{\ensuremath{\mathcal{R}}}

\newcommand{\dist}{\ensuremath{d}}
\newcommand{\len}{\mathsf{len}}
\newtheorem{conjecture}{Conjecture}
\newtheorem{question}{Question}

\nolinenumbers

\title{Peeling Half the Onion: Embedding $k$-Outerplanar Graphs into $\ell_1$ and Trees with a Polynomial Distortion (in $k$)}
\author{
Hsien-Chih Chang\thanks{Dartmouth College, USA, \texttt{hsien-chih.chang@dartmouth.edu}. Supported by the NSF CAREER award CCF-2443017.} \and
Jonathan Conroy \thanks{Dartmouth College, USA, \texttt{jonathan.conroy.gr@dartmouth.edu}. Supported by the NSF CAREER award CCF-2443017.} \and
 William Eliot  \thanks{University of Massachusetts,  Amherst, USA, \texttt{wgeliot@gmail.com}. Supported by the NSF CAREER award CCF-2237288 and the NSF grant CCF-2517033.} \and
Hung Le\thanks{University of Massachusetts,  Amherst, USA, \texttt{hungle@cs.umass.edu}. Supported by the NSF CAREER award CCF-2237288 and the NSF grant CCF-2517033.} \and 
Vinayak\thanks{University of Massachusetts,  Amherst, USA, \texttt{vvinayak@umass.edu}. Supported by the NSF CAREER award CCF-2237288 and the NSF grant CCF-2517033.}} 
\date{}

\begin{document}

\maketitle
\begin{abstract}

Chekuri, Gupta, Newman, Rabinovich, and Sinclair~\cite{CGNRS06} showed that $k$-outerplanar graphs can be embedded into trees and $\ell_1$ with distortion $2^{O(k)}$. Their result is perhaps the strongest evidence supporting the still-open planar $\ell_1$-embedding conjecture: planar metrics can be embedded into $\ell_1$ with constant distortion. However, the exponential dependency on $k$ in their distortion bound remains the state of the art. Their embedding is obtained by a so-called \EMPH{onion-peeling approach}: removing one layer of the graph at a time at the cost of incurring $O(1)$ distortion multiplicatively, and recursively embedding the resulting $(k-1)$-outerplanar graph. 

In this paper, we devise a new onion peeling approach that peels $k/2$ layers off the input $k$-outerplanar graph at every step. As a result, we obtain the first embedding of $k$-outerplanar graphs into trees and $\ell_1$ with distortion $\poly(k)$, an exponential improvement over the best-known distortion bound \cite{CGNRS06}.  
Our result of embeddings into trees comes closer to the distortion lower bound $\Omega(k)$ for $k$-outerplanar graphs~\cite{GNRS04,CJLV08}. 
\end{abstract}

\newpage

\section{Introduction}

An \EMPH{embedding} of a metric space $(X,\delta_X)$ into another metric space $(Y,\delta_Y)$ is a function $f: X\rightarrow Y$. We say that the \EMPH{distortion} of $f$ is at most $\alpha\cdot\beta$ for parameters $\alpha, \beta \geq 1$ such that:
\begin{equation*}
    \alpha^{-1}\cdot \delta_X(x,y) \leq \delta_Y(f(x), f(y)) \leq \beta\cdot \delta_X(x,y) \quad \forall x,y \in X.
\end{equation*}
Parameters $\alpha$ and $\beta$ are called the \EMPH{contraction}%
\footnote{%
One may also consider embeddings into  \EMPH{dominating metrics}, i.e, $\alpha = 1$. For our paper, it is easier to allow contraction in the embedding.} 
and \EMPH{expansion} of $f$, respectively. 
Low-distortion metric embedding has become an essential tool in the algorithmic toolkit over the last few decades. 
Two exemplary results in this area are embeddings of general metrics into $\ell_1$ with distortion $O(\log n)$~\cite{Bourgain85} and into a distribution of trees with \emph{expected} distortion $O(\log n)$~\cite{Bartal96,Bartal98,FRT04}.
These results have countless algorithmic applications, including sparsest cut~\cite{AY98,LLR95}, network design~\cite{FRT04}, and online algorithms~\cite{AAABN06}, to name a few.  The $\Omega(\log n)$ distortion bound is optimal in both cases~\cite{LR99,LLR95,Bartal96}.
Since these classical results, a major ongoing research direction is to study low-distortion embeddings (into $\ell_1$ and trees) of metrics with additional structures; the hope is to achieve distortion $o(\log n)$, and even $O(1)$. Notable structural metrics are \emph{planar} and \emph{minor-free} metrics, which are shortest path metrics of edge-weighted planar and minor-free graphs, respectively.

\paragraph{Embedding into \boldmath{$\ell_1$}.} The Okamura-Seymour Theorem~\cite{OS81}, phrased in the language of metric embedding, states that if points of $(X,\delta_X)$ are (supported by) vertices on the outerface of a planar graph, then $(X,\delta_X)$ can be embedded isometrically (that is, with distortion $1$) into $\ell_1$. 
This basic result is a stepping stone for  Gupta,  Newman,  Rabinovich, and Sinclair~\cite{GNRS04}.
They showed that the shortest-path metrics of graphs in a class $\mathcal{G}$ are embeddable into $\ell_1$ with distortion $1$ if and only if $\mathcal{G}$ is minor-closed and does not contain $K_{2,3}$; this contains the  class of outerplanar graphs.
This simple result hints at a deep connection between topology (excluding a fixed minor) and geometry ($\ell_1$-embeddability). It is formalized via the following conjecture, which has been driving recent research on the study of planar/minor-free metrics.
\begin{conjecture}[$\ell_1$-Embedding Conjecture~\cite{GNRS04}]
\label{conjec:emb} 
Planar metric, and more generally minor-free metric, admit an embedding into $\ell_1$ with $O(1)$ distortion.
\end{conjecture}

Gupta, Sinclair, Newman, and Rabinovich~\cite{GNRS04} observed that \Cref{conjec:emb} would give an alternate characterization of minor-free graphs because the converse is true: if any graph in a class $\mathcal{G}$
 can be embedded into $\ell_1$ with $O(1)$ distortion (for any non-negative weight function on the edges), then $\mathcal{G}$ is minor-closed.  Despite intensive research since its introduction, \Cref{conjec:emb} remains widely open.   The best-known distortion for planar metrics is $O(\sqrt{\log n})$ by Rao~\cite{Rao99}, which was later extended to minor-free metrics~\cite{KLMN04,AGGNT19,AFGN22}. There have been a few different approaches~\cite{KLMN04,AFGN22,KLR19,Filtser20} toward \Cref{conjec:emb}, but none could overcome the $O(\sqrt{\log n})$ distortion barrier. On the other hand, the best-known lower bound, obtained very recently~\cite{KK25}, is $16/7\approx 2.28$. 

 An orthogonal approach is to achieve $O(1)$ distortion for well-studied subclasses of planar and minor-free graphs~\cite{GNRS04,CGNRS06,LR09,Sidiropoulos2013}. 
 Lee and Sidiropoulos \cite{LS09,LS13} showed that pathwidth-$k$ graphs are embeddable into $\ell_1$ with distortion $(4k)^{k^3+2}$. 
 (A follow-up work~\cite{AFGN22} reduced the distortion to $O(k)$ by developing a new embedding framework based on shortest-path decomposition (SPD) depth.) 
 Lee and Sidiropoulos \cite{LS09} used Robertson-Seymour decomposition~\cite{RS03}, reducing the embedding of minor-free graphs to two seemingly easier conjectures, though both of them also remain widely open: (1) the $\ell_1$-embedding conjecture of planar metrics and (2) the so-called \emph{$k$-sum conjecture}. 
 The resolution of $2$-sum conjecture~\cite{LP13} gives more credence to the $k$-sum conjecture, but nothing is known when $k\geq 3$. 
 
An evidence supporting the embedding conjecture for planar metrics is the result for $k$-outerplanar graphs by Chekuri, Gupta, Newman, Rabinovich, and  Sinclair~\cite{CGNRS06}. A graph is \EMPH{$k$-outerplanar} if its vertex set can be partitioned into $k$ consecutive layers, starting from the outerface (see  \Cref{sec:prelim} for a formal definition). The class of $k$-outerplanar graphs plays a central role in designing polynomial-time approximation schemes in planar graphs and specifically, in Baker's technique~\cite{Baker94}, which effectively reduces designing $(1+\e)$-approximation algorithms in planar graphs to those in $k$-outerplanar graphs for $k = 1/\e$.  
The authors of~\cite{CGNRS06} gave an embedding of $k$-outerplanar graphs into $\ell_1$ with distortion $2^{O(k)}$. 
One immediate corollary is that distortion $O(1)$ can be achieved for planar graphs when the number of layers $k$ is constant; 
previously, $O(1)$ distortion was only known when $k = 1$, i.e., outerplanar graphs. 
Another simple corollary is a \emph{fractional flow-cut gap} of $2^{O(k)}$ for $k$-outerplanar graphs. 
Chekuri, Shepherd, and Weibel~\cite{CSW13} strengthened this result to obtain an \emph{integer} flow-cut gap of  $2^{O(k)}$ for $k$-outerplanar graphs. 
Indeed, their flow-cut gap result holds for the more general \EMPH{$k$-shell instances} where at least one terminal of each demand pair lies in one of the $k$ outer layers. 
The $k$-shell instances can be seen as an approximate generalization of the Okamura-Seymour instances which correspond to $1$-shell instances, and fit well with an ongoing research effort to obtain generalizations~\cite{Kummar20,Kumar25}. 

All of the aforementioned results exhibit the same exponential dependence on $k$, stemming from the $\ell_1$-embedding bound of~\cite{CGNRS06}, which has stood for more than two decades. 
While the planar $\ell_1$-embedding conjecture, if true, implies that the distortion bound is $O(1)$ and independent of $k$, its refutation does not rule out a possible $O(\sqrt{\log k})$ distortion. 
Specifically, for general planar graphs, $k = O(n)$ and  the distortion becomes $O(\sqrt{\log n})$, matching the currently best known bound by Rao~\cite{Rao99}. 
In this work, we seek to narrow this huge gap between the upper and lower bounds.

\begin{question}\label{ques:embedL1_Q1}
    Can we improve the $\ell_1$-embedding distortion of $k$-outerplanar graphs substantially, to $\poly(k)$, or even to $O(\poly\log k)$? 
\end{question}

While one could ask for a distortion $O(\sqrt{\log k})$, a $\poly\log k$ distortion would be a more realistic milestone in the context of current literature.

\paragraph{Embedding into a distribution of trees.}  Let $\mathcal{D}$ be a distribution of embeddings of $(X,\delta_X)$ such that every embedding in $\mathcal{D}$ has contraction at most  $\alpha$. We say that the embedding has \EMPH{expected expansion} at most $\beta\geq 1$ if 
$\E_{f\sim \mathcal{D}} \delta_Y(f(x), f(y)) \leq \beta \cdot \delta_X(x,y)$.  
If $\mathcal{D}$ has contraction at most $\alpha$ and expected expansion at most $\beta$, we say that  $\mathcal{D}$ has {expected distortion} at most $\alpha\cdot\beta$. 
The distribution $\mathcal{D}$ induces a distribution of the host metrics $(Y,\delta_Y)$, one for each embedding in $\mathcal{D}$. 
If every metric $(Y,\delta_Y)$ is a shortest path metric of some tree, then we say that $\mathcal{D}$ is an embedding into a distribution of trees. 
Embedding into (a distribution of) trees is stronger than embedding into $\ell_1$ in the following sense: if $(X,\delta_X)$ is embeddable into trees with expected distortion $D$, then $(X,\delta_X)$ is embeddable into $\ell_1$ with (worst-case) distortion $D$. 

As in the case of $\ell_1$ embedding, it is natural to ask if the expected distortion bound $O(\log n)$ for embeddings into trees of general metrics can be improved for planar and minor-free metrics. 
In stark contrast to $\ell_1$ embeddings, the answer is overwhelmingly negative: for very simple subclasses of planar graphs, including series-parallel graphs~\cite{GNRS04,CJLV08} and grid graphs~\cite{CG04}, the expected distortion must be $\Omega(\log n)$. 
Surprisingly perhaps, the aforementioned embedding of $k$-outerplanar graphs into $\ell_1$ 
by~\cite{CGNRS06} is obtained via embedding into trees.  
While the expected distortion is $2^{O(k)}$, the fact that it does not depend on $n$ is remarkable when considering the distortion lower bound $\Omega(\log n)$ for series-parallel graphs~\cite {CG04}, a.k.a.\ planar graphs of treewidth 2. 
It is often the case that efficient algorithmic results for $k$-outerplanar graphs are obtained via a combination of planarity and small treewidth~\cite{Dorn10,DPBF10}. 
The distortion separation between treewidth-2 planar graphs and $k$-outerplanar graphs implies that $k$-outerplanar graphs enjoys additional structural properties that can potentially be utilized. 
Nevertheless, there is a glaring gap between the upper bound of $2^{O(k)}$ and the lower bound $\Omega(k)$ from known techniques~\cite{GNRS04,CJLV08}. Here we ask to narrow this exponential gap.

\begin{question}\label{ques:embedL1_Q2}
    Can we embed $k$-outerplanar graphs into a distribution of trees with expected distortion $\poly(k)$? 
\end{question}

\subsection{Our Contribution}
Our main result is a positive answer to \Cref{ques:embedL1_Q2}: an embedding $k$-outerplanar graphs into a distribution over trees with a polynomial distortion in $k$. 

\begin{theorem}\label{thm:main}
    The shortest path metric of every $k$-outerplanar graph admits an embedding into a distribution over trees with expected distortion $O(k^c)$ for some absolute constant $c$.
\end{theorem}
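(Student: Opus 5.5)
We follow the onion-peeling strategy hinted at in the abstract, but peel roughly $k/2$ layers per round instead of one. The goal is a recurrence of the form $D(k) \le C\cdot D(k/2)$ plus polynomial overhead, for an absolute constant $C$. This solves to $D(k)=k^{O(1)}$, in contrast to $D(k)\le O(1)\cdot D(k-1)$, which yields $2^{O(k)}$ as in \cite{CGNRS06}.

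\textbf{Step 1 (layered distance structure).} Fix a $k$-outerplanar graph $G$ with layers $L_1,\dots,L_k$, where $L_1$ is the outer face. Following \cite{CGNRS06}, we reduce to embedding the inner graph $G_{>j}$, which is induced by layers $j+1,\dots,k$, once the outer part is handled. The outer part is handled by a random retraction: each vertex of the outer layers is mapped to a vertex of the boundary of the inner part. Concretely, we pick a random threshold (a random shift in the spirit of Baker's technique) and choose $j\in[k/4,3k/4]$ so that the ``cut'' between outer and inner parts is balanced. A given edge is then separated with probability $O(1/k)$ per layer boundary. The cost of separating an edge is bounded through a stretch of $\poly(k)$. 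This random choice is what keeps expected, rather than worst-case, distortion polynomial.

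\textbf{Step 2 (peeling half the onion).} The outer part $G_{\le j}$, together with its boundary to the inner part, is a $j$-outerplanar graph with holes. We embed it as follows. We contract or retract the inner region onto its boundary cycles, which are faces once the inner components are removed. We then use a tree-like decomposition of the outer annulus along shortest paths between consecutive layers. This is the key new ingredient. We aim for a lemma of the following form: the metric of $G$ restricted to $G_{\le j}$ can be written as a distribution over graphs, each obtained by gluing $(k/2)$-outerplanar pieces along shortest paths (1-sums or path-sums), with expected distortion $\poly(k)$ and not $2^{k}$. The gluing is along shortest paths, so the framework of SPD-depth embeddings \cite{AFGN22} and 1-sum composition of tree embeddings lets us combine tree embeddings of the pieces with only additive or constant multiplicative loss.

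\textbf{Step 3 (recursion).} Each resulting piece is at most $\lceil k/2\rceil+O(1)$-outerplanar, so we recurse. The overall distortion is $D(k)\le C\,D(k/2+O(1))\cdot \poly(k)^{?}$. The structure must ensure that the $\poly(k)$ loss appears only once globally, for instance by charging it to the depth-$O(\log k)$ recursion additively in the exponent, so that the total is $C^{O(\log k)}\cdot\poly(k)=\poly(k)$. The base case of outerplanar graphs has constant distortion \cite{GNRS04}.

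\textbf{Main obstacle.} The crux is Step 2: removing $k/2$ layers at once with only constant (or polynomial, non-compounding) multiplicative loss. Shortest paths between boundary vertices of the inner region may weave through many outer layers. A naive retraction therefore composes $k/2$ single-layer losses into $2^{\Omega(k)}$. My first attempt would be to retract each outer vertex to the inner boundary along a random shortest-path tree rooted at the inner region. I would then bound the expected detour using a padding-style argument over $O(\log k)$ distance scales, mirroring Bartal/FRT but with scale count depending on $k$ rather than $n$. If this fails, I would instead cut along $O(k)$ shortest paths that cross the annulus, turning it into a disk-like outerplanar-with-paths structure amenable to Okamura–Seymour-type isometric $\ell_1$/tree arguments.
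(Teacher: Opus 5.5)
There is a genuine gap. The step you yourself identify as the crux (Step 2) is only stated as a lemma you ``aim for,'' and the ingredients you place around it would not establish it.

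(i) The Baker-style threshold $j\in[k/4,3k/4]$ in Step 1 is combinatorial. A very light edge is separated with the same $\Theta(1/k)$ probability as a heavy one, yet nothing in your retraction ties the resulting detour to the edge's weight. In a weighted graph that detour can be arbitrarily large relative to $w(e)$, so the claimed ``stretch of $\mathrm{poly}(k)$'' has no basis.

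(ii) Your recurrence closes only if each halving costs $O(1)$. A $\mathrm{poly}(k)$ factor per level compounds over the $\log_2 k$ levels to $k^{\Theta(\log k)}$, and the ``?'' in Step 3 is exactly where the argument is missing.

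(iii) Your fallbacks do not fill the gap. The number of distance scales in a weighted graph depends on the aspect ratio, not on $k$, so an FRT-type padding argument over ``$O(\log k)$ scales'' is not available. Gluing tree embeddings along shortest paths is also not free the way 1-sums are: iterated gluings of outerplanar pieces along single-edge shortest paths already produce the series-parallel (diamond) metrics, which need $\Omega(\log n)$ distortion into trees. So the loss grows with a gluing depth you never bound.

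Finally, you never fix which half is kept and which is collapsed. Step 1 retracts the outer part onto the inner boundary, while Step 2 retracts the inner part onto its boundary. You also never say how the halves are recombined so that every subproblem is $(k/2+O(1))$-outerplanar and the subproblems interact only through 1-sums.

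The missing idea, which is how the paper proceeds, is to keep the outer half intact and collapse only each inner component onto its outer face $T=L_{\lfloor k/2\rfloor}$ by Steiner point removal. The randomness is introduced metrically rather than by layers. The paper uses rings $\{v: 2^{i-\alpha}\le d_G(v,T)<2^{i+1-\alpha}\}$ with a random shift $\alpha$, and chops each ring at scale $2^i$ with the stochastic shortcut partition of \cite{CCCLPP25}. Then an edge $(u,v)$ is cut with probability $O(d_G(u,v)/\min\{d_G(u,T),d_G(v,T)\})$, while a cut edge's detour in the Steiner graph $G_S$ is only $d_G(u,v)+O(\max\{d_G(u,T),d_G(v,T)\})$.

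Steiner edges receive weight at least $\max_{v\in C_a\cup C_b}d_G(v,T)$. The shortcut partition yields the \emph{deterministic} bound $d_{G_S}(s_C,T)=O(\max_{v\in C}d_G(v,T))$. A bound in expectation would not suffice, because the analysis must condition on the edge being cut. Consequently every Steiner edge pays for the detour through its nearest terminals, and the planar SPR of \cite{chang2024shortcut} loses only $O(1)$.

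The result is an outerplanar graph on $T$ with $(k/2)$-outerplanar clusters attached by single edges. Those clusters and the remaining $(\lfloor k/2\rfloor+1)$-outerplanar graph are embedded recursively and glued along cut edges. This gives $O(1)$ loss per level and $c^{\log_2 k}=k^{\log_2 c}$ overall.
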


The value of $c$ in \Cref{thm:main} is rather large and difficult to pinpoint exactly; we use several other subroutines as black boxes, whose (large) constants are unspecified.  Since embedding into trees implies embedding into $\ell_1$ with the same distortion, we obtain an embedding into $\ell_1$ with a distortion polynomial in $k$, a significant progress toward \Cref{ques:embedL1_Q1}. 

\begin{corollary} \label{cor:l1distortion}
    The shortest path metric of every $k$-outerplanar graph admits an embedding into $\ell_1$ with distortion $O(k^c)$ where $c$ is the constant in \Cref{thm:main}.
\end{corollary}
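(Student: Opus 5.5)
The plan is to derive \Cref{cor:l1distortion} directly from \Cref{thm:main}, using the standard fact that tree metrics embed isometrically into $\ell_1$ and that $\ell_1$ is closed under nonnegative combinations (concatenation of coordinates with scaling). Fix a $k$-outerplanar graph $G$ with shortest path metric $(V,\delta_G)$. By \Cref{thm:main} there is a distribution $\mathcal{D}$ over embeddings $f$ of $V$ into tree metrics $(V_{T_f},\delta_{T_f})$ such that every $f$ in the support has contraction at most $\alpha$, and
\[
\E_{f\sim\mathcal{D}}\,\delta_{T_f}(f(x),f(y)) \le \beta\cdot\delta_G(x,y) \quad \text{for all } x,y\in V,
\]
with $\alpha\beta = O(k^c)$. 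The support of $\mathcal{D}$ may be assumed finite: $V$ is finite, and a standard Carathéodory or LP argument replaces any distribution by one of finite support with the same pairwise expectations. If the construction in \Cref{thm:main} already yields finitely many trees, this step is unnecessary.

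First, I embed each tree metric isometrically into $\ell_1$. Root $T_f$ and, for each edge $e$ of $T_f$ with weight $w_e$, add a coordinate whose value at a vertex is $w_e$ if $e$ lies on that vertex's path to the root and $0$ otherwise. The $\ell_1$ distance between two vertices is then the total weight of the edges on their symmetric difference of root paths, which is exactly their tree distance. Call this map $g_f$.

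Next, define $F(x) = \bigoplus_{f} \Pr[f]\cdot g_f(f(x))$, the concatenation of the scaled coordinates over all $f$ in the support. Then
\[
\|F(x)-F(y)\|_1 = \E_f\,\delta_{T_f}(f(x),f(y)).
\]
This quantity is at most $\beta\,\delta_G(x,y)$. Since each $f$ has contraction at most $\alpha$, it is also at least $\E_f\,\alpha^{-1}\delta_G(x,y) = \alpha^{-1}\delta_G(x,y)$. Hence $F$ has worst-case distortion at most $\alpha\beta = O(k^c)$, with the same constant $c$.

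There is no real obstacle here. The only point requiring care is that the contraction bound must hold for every tree in the support, not merely in expectation, and the definition of expected distortion used in \Cref{thm:main} guarantees exactly this. All of the substantive work lies in \Cref{thm:main} itself.
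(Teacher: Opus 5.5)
Your proposal is correct and follows the paper's own route. The paper derives the corollary in one line from Theorem~\ref{thm:main}, using the standard fact that a distribution over tree embeddings with contraction $\alpha$ and expected expansion $\beta$ gives an $\ell_1$ embedding with worst-case distortion $\alpha\beta$; you have simply written out that fact (root-path coordinates for the isometric tree-to-$\ell_1$ map, then a probability-weighted direct sum, with the finite-support reduction a harmless extra step).
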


\subsection{Technical Ideas}

\paragraph{Onion peeling: one layer at a time.} The recursive nature of the definition of $k$-outerplanar graphs suggests a simple \EMPH{onion-peeling approach} for embedding into trees: peeling off the outermost layer, recursively embedding the remaining graph into a tree, and then gluing back the tree and the outermost layer. 
More formally, let $G_k$ be the input $k$-outerplanar graph, and $\{L_1,L_2,\ldots, L_k\}$ be the partition of the vertices of $G_k$ into $k$ layers, with $L_1$ being the outerface. Let $G_{k-1} = G_{k}\setminus L_1$,  the graph obtained from $G_{k}$ by removing the outerface. Observe that $V(G_{k-1}) = L_2\cup \ldots \cup L_k$ and hence $G_{k-1}$ is $(k-1)$-outerplanar. We recursively embed each connected component of $G_{k-1}$ into a (random) tree. Let $F_{k-1}$ be the resulting forest, one tree from the embedding of each component of $G_{k-1}$. Let $H$ be the graph obtained by gluing each tree of $F_{k-1}$ into a face of $L_1$, following the drawing of $G_{k-1}$ on the plane. A key observation is that each biconnected component of $H$ is a Halin graph, a planar graph obtained by connecting the leaves of a tree to a cycle. From a structural standpoint, the Halin graph is only slightly more complicated than outerplanar graphs. Indeed, the authors in~\cite{CGNRS06} provide a reduction to outerplanar graphs by applying ideas from the Steiner point removal problem~\cite{Gutap01}, breaking the trees inside each face into smaller subtrees, and contracting them into the vertices on $L_1$. The resulting graph, denoted by $H^-$, is now an outerplanar graph, and one can invoke a known embedding of an outerplanar graph into trees with $O(1)$ distortion~\cite{GNRS04}.

For the distortion analysis, let $s_k$ be the distortion of the embedding of $G_{k}$. Since the embedding of $H$ into a tree has distortion $O(1)$, we have:
\begin{align}\label{eq:distorion}
    s_k = O(1) \cdot s_{k-1}   \qquad \text{and} \qquad s_1 = O(1)
\end{align}
The base case $G_1$ is an outerplanar graph and hence $s_1 = O(1)$~\cite{GNRS04}. Solving \Cref{eq:distorion} gives $s_k = O(1)^{k} = 2^{O(k)}$. The constant in the base of the exponent is unspecified in~\cite{CGNRS06} as it is not simple to calculate. We remark that the resulting tree in the embedding of $G_k$ is non-spanning due to the Steiner point removal step.

Stronger versions of this embedding, including embedding into spanning trees~\cite{Emek11} and integer flow-cut gap~\cite{CSW13}, follow the same onion peeling approach, and therefore, incur the same exponential dependence on $k$.

\paragraph{What do we do? Peel half the onion.}  
To motivate our technique, let us describe a (faulty) approach to reduce the distortion to $\poly(k)$. 
First, we divide $V(G_k)$ into two sets $V_{1} =  L_1\cup \ldots \cup L_{\floor{k/2}}$ and $V_2 =  L_{\floor{k/2}+1}\cup \ldots \cup L_{k}$. Let $H_1 = G_k[V_1]$ and $H_2 = G_k[V_2]$. Since $H_1$ and $H_2$ are subgraphs induced by (roughly) $k/2$ layers, they are $(k/2)$-outerplanar. Next, we recursively embed $H_1$ and $H_2$ into two trees\footnote{Graph $H_2$ might not be connected and hence $T_2$ is a forest instead of a tree. To simplify the discussion, we assume that $H_2$ is connected.}, say $T_1$ and $T_2$, respectively. Let $K$ be the graph obtained by adding edges between $V_1$ and $V_2$ to $T_1\cup T_2$. That is, $V(K) = V(G_k)$ and $E(K) = E(T_1)\cup E(T_2)\cup \left(E(G_k)\setminus E(H_1\cup H_2)\right)$. 
If---a big IF---one can embed $K$ into a tree $T$ with expected $O(1)$ distortion, then $T$ is an embedding of $G_k$ with a polynomial expected distortion.  To see this, let $s_k$ be the distortion of $T$. Since  $H_1$ and $H_2$ are $(k/2)$-outerplanar, the maximum distortion  in $T_1\cup T_2$ of edges of $G_k$ inside $H_1$ and $H_2$ is at most $s_{k/2}$.  Since $T$ has  distortion $O(1)$ in  $K$, the distortion of the edges in $E(G_k)\setminus E(H_1\cup H_2)$, which suffer the maximum distortion over all edges of $G_k$, is at most:
\begin{equation}\label{eq:dist-half}
    s_k = O(1)\cdot s_{k/2}.
\end{equation}
Given $s_k = k^{O(1)}$, this is a polynomial bound. 

As alluded to above, the problem is that it is unclear how to embed $K$ into trees with expected distortion $O(1)$. The outerplanarity of $K$ could still be $\approx k$, and there is no clear evidence that embedding $K$ would be easier than embedding $G_k$. Nevertheless, an important takeaway of this faulty approach is that we can improve the distortion substantially by removing $k/2$ layers at a time; essentially, we \EMPH{peel half the onion}.

A major challenge with peeling half the onion is to combine the tree embeddings of the inner and outer components into a structure that has outerplanarity less than $k$. To tackle this challenge, we perform tree embedding only on the inner $(k/2)$-outerplanar components, denoted by $H_2$. In each component, we ``empty'' those vertices that are not terminals (which eventually hang off the terminals as trees) to obtain an outerplanar embedding. As a whole, this procedure reduces the outerplanarity to $\lfloor k/2\rfloor$. One then recurses.

In more detail, let $T$ be the vertices on layer $L_{\lfloor k/2\rfloor}$, called \EMPH{terminals}. We ``break open'' each component of $H_2\setminus T$ by applying the following sequence of operations. 
First, we apply a \emph{low-diameter decomposition} at various scales, based on distance to $T$: for every scale $i$, we cluster the vertices which are at distance around $\Theta(2^i)$ from $T$ into clusters of diameter $\Theta(2^i)$.
Next, we introduce a \emph{Steiner vertex} for each cluster and connect it to its corresponding cluster by adding an edge to some arbitrary ``root'' vertex on the outermost layer of the cluster. Then we add an edge between two Steiner points if there is an edge between the corresponding clusters. The weight of the edge is proportional to the distance between the clusters as well as the distance to the terminal layer.
Morally speaking, one should imagine that the edges incident to a Steiner vertex representing a scale-$i$ cluster have weight $\Theta(2^i)$. 
We call the resulting graph the {Steiner graph $G_S$}.
Finally, we apply the \emph{Steiner point removal} (SPR) technique of~\cite{chang2024shortcut} with the terminal set $T$ to obtain an outerplanar embedding with clusters hanging off the terminal vertices. Call this graph \smash{$\hat G$}. The clusters attached to vertices in $T$ are, however, $(k/2)$-outerplanar components rather than trees. Since each attached cluster is itself a $(k/2)$-outerplanar component, they can be recursively embedded independently into trees. As a result, we get a $(k/2)$-outerplanar embedding of $G$, and we can recurse further.

Conceptually, our approach is reminiscent of the technique \cite{CGNRS06} used for embedding Halin graphs into outerplanar graphs. 
In both cases, we carve out clusters by maintaining some notion of preserving distance to terminal cycle, and then perform a Steiner point removal procedure. The key difference is that our procedure carves out clusters and performs SPR in a general planar graph, rather than a Halin graph. As we describe below, this causes some subtle technical challenges that we overcome with a more careful clustering method.

\paragraph{A key subtlety in the distortion analysis.} Let's analyze the distortion incurred at each iteration. We aim for $O(1)$ distortion. (Indeed, each edge participates in the cutting procedure in at most $\lceil\log_2 k\rceil$ levels, so the expected distortion over all iterations will remain $\poly(k)$). Consider an edge $e = (u,v)$ in $G$ at scale $i$, meaning that the endpoints of $e$ are at distance $\Theta(2^i)$ from $T$ in $G$.
The edge $e$ is cut with probability at most $w(e)/\Theta(2^i)$; if it is cut, then the construction of $G_S$ guarantees that there is a path between the endpoints of $e$ with length $\Theta(2^i)$. Thus, $\mathbb{E}[\dist_{G_S}(u,v)] \le O(1) \cdot \dist_G(u,v)$. Now, what is the distortion incurred when we apply Steiner point removal to $G_S$ to produce \smash{$\hat G$}?
Consider an edge $e_S = (u_S, v_S)$ in $G_S$. If the endpoints of $e_S$ are terminals, then SPR guarantees that $\dist_{\hat G}(u_S, v_S) \le O(1) \cdot \dist_{G_S}(u_S, v_S)$. The difficult case is when the endpoints of $e_S$ are not terminals. Suppose that the endpoints of $e_S$ represent scale-$i$ clusters. In this case we have $\dist_{G_S}(u_S, v_S) = \Theta(2^i)$.  One can show that the SPR procedure produces distortion proportional to the distance between $e_S$ and $T$ \ul{in $G_S$}: that is, $\dist_{\hat G}(u_S, v_S) \le \Theta(2^i) + O(\dist_{G_S}(u_S, T) + \dist_{G_S}(v_S, T))$. 
So, to prove that $\dist_{\hat G}(u_S, v_S) = \Theta(2^i)$, we must show that $\dist_{G_S}(u_S, T) = O(2^i)$ and $\dist_{G_S}(v_S, T) = O(2^i)$.
One may now hope to make the following argument:
\begin{quote}
    Let $u$ and $v$ be arbitrary vertices in $V(G)$ that are contained in the clusters $u_S$ and $v_S$, respectively. By construction of $G_S$, we have that $\dist_{G_S}(u_S, T) \approx \dist_{G_S}(u, T)$ and $\dist_{G_S}(v_S, T) \approx \dist_{G_S}(v, T)$. 
    
    Because $u_S$ and $v_S$ are scale-$i$ clusters, we have $\dist_G(u,T) = O(2^i)$ and $\dist_G(v,T) = O(2^i)$. The expected distortion guarantee of $G_S$ implies that $\mathbb{E}[\dist_{G_S}(u,T)] \approx \mathbb{E}[\dist_{G_S}(v,T)] = O(2^i)$.
    So we conclude that $\mathbb{E}[\dist_{\hat G}(u_S, v_S)] = O(2^i)$ as desired.
\end{quote}
But there is a flaw in this analysis: a bound on the expected distance $\mathbb{E}[\dist_{G_S}(v,T)]$ is not enough for us!
If we try to make this analysis precise, we find that we need to bound the \emph{conditional expectation} of $\mathbb{E}[\dist_{G_S}(v,T)]$, conditioned on a certain edge $(u,v)$ in $G$ being cut by the random clustering (or equivalently, conditioned on a certain edge $(u_S, v_S)$ appearing in the graph $G_S$). We do not know how to bound the conditional expectation if we use the standard low-diameter decomposition to cluster.

Our fix is to use a \EMPH{stochastic shortcut partition} \cite{CCCLPP25} when clustering. With this partition, we will be able to \emph{deterministically} guarantee that, for any vertex $v$ in $G$ in a scale-$i$ cluster, we have 
$\dist_{G_S}(v,T) = O(1) \cdot \dist_G(v,T) = \Theta(2^i)$.
(Note that this guarantee may be, a prior, somewhat surprising---clearly it is impossible to deterministically preserve \emph{all} distances by $O(1)$ factor when embedding into a tree, but here we show that it is possible to deterministically preserve the \emph{vertex-to-$T$} distances.)
Indeed, the shortcut partition will let us find a path in $G$ from $v$ that walks through only $O(1)$ clusters before reaching a scale-$(i-1)$ cluster; by repeatedly finding such a path, we obtain a path in $G$ from $v$ to some terminal vertex in $T$, which intersects only $O(1)$ scale-$j$ clusters for every $j \le i$. A geometric series implies that, in the Steiner graph $G_S$, the corresponding path has length $\sum_{j \le i} \Theta(2^j) = \Theta(2^i)$. We conclude that $\dist_{G_S}(v,T) = \Theta(2^i)$ as desired.

\begin{figure}[ht]
    \centering
    \includegraphics[width=0.66\linewidth]{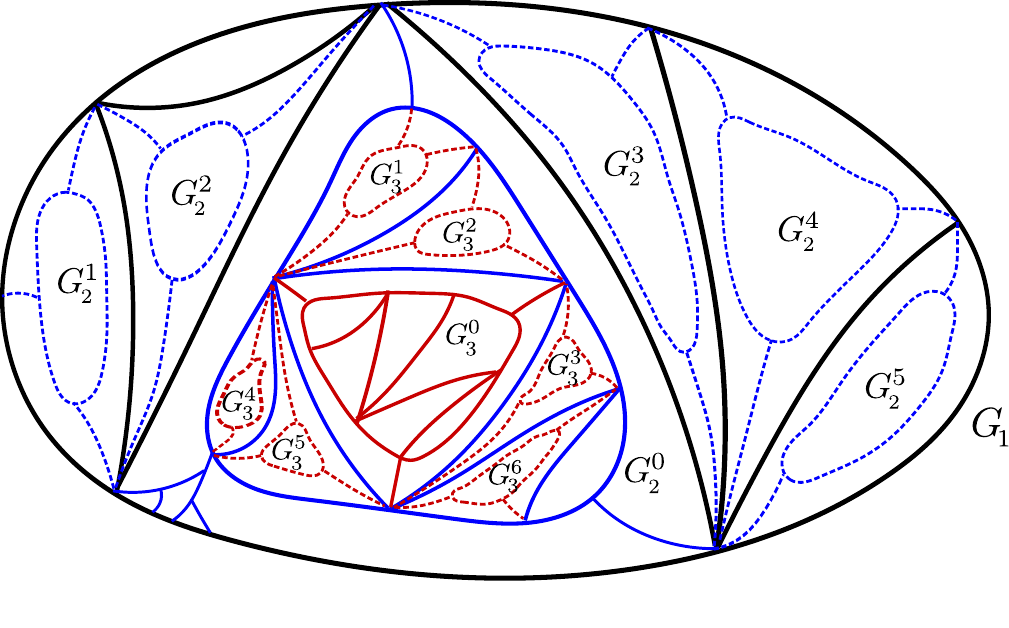}
    \caption{An illustration of the layer decomposition in a $3$-outerplanar graph. The black, blue, and red vertices are part of the layers $L_1$, $L_2$, and $L_3$ respectively. }
    \label{fig:3op-layer-decomp}
\end{figure}

\subsection{Notations and Preliminaries}\label{sec:prelim}

Let $G = (V, E, w)$ be a plane graph (that is, a planar graph together with a drawing in the plane). We denote the set of vertices on the $i^{\text{th}}$ \EMPH{outerplanar layer} as \EMPH{$L_i$}. More precisely, let $L_1$ denote the set of vertices on the outer face of $G$; for $i > 1$, let $L_i$ denote the set of vertices on the outer face of $G[V \setminus \bigcup_{j < i} L_j]$.
We say $G$ is \EMPH{$k$-outerplanar} if all vertices of $G$ are contained in $k$ outerplanar layers, $\bigcup_{i\in[k]} L_i$.

We now denote the subgraphs $\{G_i^j\}$ for $j=1,2\ldots$, where each $G_i^j$ is a connected component in the graph $G[V\setminus \bigcup_{l<i} L_l]$, i.e. the graph obtained by removing the outer $i$ layers of $G$. 

For any two vertices $u,v \in V$, we let $d_G(u,v)$ denote the distance between $u$ and $v$ according to the shortest path metric in $G$. For an edge $e = (u,v)$, we assume without loss of generality that its weight is set as $w_G(u, v) = d_G(u, v)$. The distance from a vertex $v$ to a set $X \subseteq V$ is $d_G(v, X) = \min_{x\in X} d_G(v, x)$.

\section{The Big Picture}

The following recursive procedure, $\EMPH{\textsc{TreeEmbed}}$, outputs the tree embedding of input weighted $k$-outerplanar graph $G$. 
The idea is to run our novel procedure $\EMPH{\textsc{EmbedKOuterplanar}}$ on each individual component of the subgraph induced by the inner $k/2$ layers, which turns each component into an outerplanar graph $\hat H$ with $(k/2)$-outerplanar components hanging off the outermost face of $\hat H$. 
Since these subgraphs are separated from the rest of the graph via single edge connections, we can treat them separately, and therefore we can recurse on each of these components, without accumulating distortion on any other edge of the graph. 
One round of this procedure reduces the outerplanarity by a multiplicative factor of two. 
For the base case when the graph is at most 2-outerplanar, the procedure simply calls the black box algorithm of tree embedding of 2-outerplanar graphs, 
$\textsc{2-OuterplanarToTree}$ procedure of~\cite{gnrs99,CGNRS06}, which gives a constant factor distortion embedding of $2$-outerplanar graphs into random trees.

\begin{figure}[h!]
\small
\centering
\begin{algorithm}
\textul{$\textsc{TreeEmbed}(G)$}: \+
\\ if $G$ is at most $2$-outerplanar: \+
\\      return $\textsc{2-OuterplanarToTree}(G)$ \-
\\  $G_{\lfloor k/2\rfloor}^1, \dots, G_{\lfloor k/2\rfloor}^L \gets$ $\lceil k/2\rceil$-outerplanar components of $G$
\\ $T^j \gets$ the outerface of $G_{\lfloor k/2\rfloor}^j$ for each $j\in[L]$ 
\\  for $i$ from $1$ to $L$: \+
\\      $(\hat{H}, \mathcal{C}) \gets$ \textsc{EmbedKOuterplanar}$(G_{\lfloor k/2\rfloor}^i, T^i)$
\\      for each $C \in \mathcal{C}$: \+
\\          $T_C \gets \textsc{TreeEmbed}(C)$ \-\-
\\ $\bar{G} \gets (\lfloor k/2 \rfloor+1)$-outerplanar graph formed by replacing $G^j_{\lfloor k/2\rfloor}$ with $\hat{H}^j \setminus \mathcal{C}^j$ for each $j\in [L]$ 
\\ $T' \gets \textsc{TreeEmbed}(\bar{G})$
\\ return the tree formed by gluing $\{T_C\}_{C\in \mathcal{C}}$ to $T'$ via their cut-vertices \-

\\
\\
\textul{$\textsc{EmbedKOuterplanar}(H,T)$}: \+
\\  $(H_S,\mathcal{C}) \gets \textsc{LowDiameterDecomposition}(H, T)$
\\  $\hat{H} \gets \textsc{SteinerPointRemoval}(H_S, \cC, T)$
\\  return $(\hat{H}, \mathcal{C})$\-
\\
\\
\textul{$\textsc{SteinerPointRemoval}(H_S, \cC, T)$}: \+
\\  $\hat{H} \gets \textsc{PlanarSPR}(H_S\setminus \cC, T)$
\\  for each $C\in\cC$:\+
\\      Attach $C$ to associated terminal $t\in T$ in $\hat{H}$.\-
\\  return $(\hat{H}, \cC)$
\end{algorithm}
\caption{The procedures \textsc{TreeEmbed}, \textsc{EmbedKOuterplanar}, and \textsc{SteinerPointRemoval}.}
\label{fig:tree-embed}
\end{figure}

\begin{figure}[t]
    \centering
    \begin{subfigure}{0.3\textwidth}
        \includegraphics[width=0.95\textwidth]{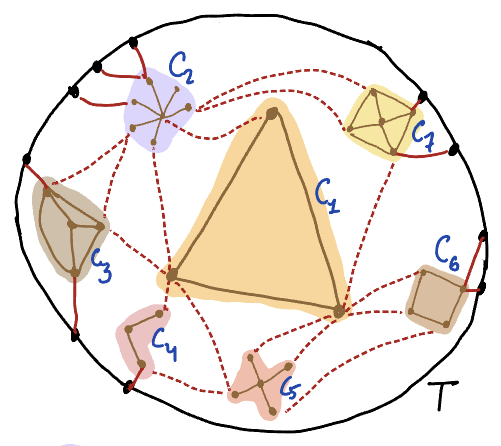}
         \caption{}
    \end{subfigure}
    \begin{subfigure}{0.3\textwidth}
        \includegraphics[width=1.0\linewidth]{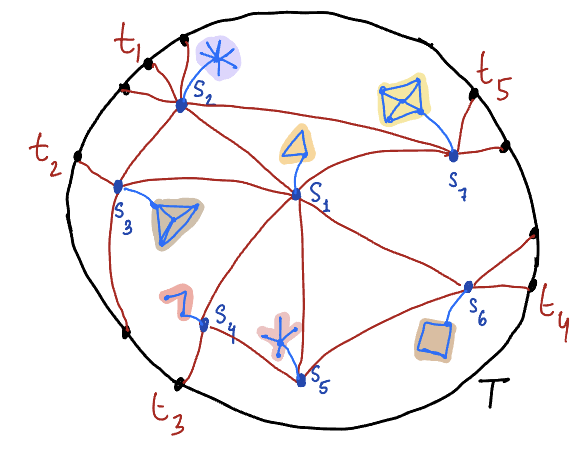}
        \caption{}
    \end{subfigure}
    \begin{subfigure}{0.3\textwidth}
        \includegraphics[width=1.0\linewidth]{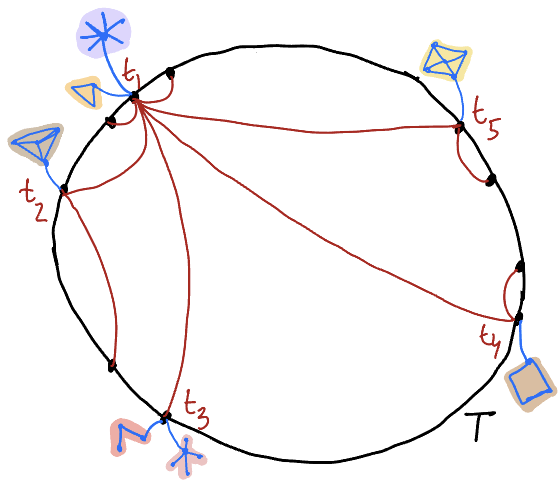}
         \caption{}
    \end{subfigure}
    \caption{One iteration of the $\textsc{EmbedKOuterplanar}$ procedure, with terminal set $T$. Figure (a) depicts the original graph $G$ with clusters after \textsc{LowDiameterDecomposition}. Figure (b) depicts the graph $G_S$, with added Steiner vertices. Figure (c) depicts the graph $\hat{G}$, after \textsc{SteinerPointRemoval} on $G_S$.}
    \label{fig:algorithm-stages}
\end{figure}

\medskip
The procedure $\textsc{EmbedKOuterplanar}$ is where the heart of the algorithm lies. $\textsc{EmbedKOuterplanar}$ calls two subroutines: 
\begin{enumerate}
\item \textsc{LowDiameterDecomposition}$(H, T)$: on input of a graph $H$, the subroutine returns a new $(k/2)$-outerplanar graph $H_S$ called the \emph{Steiner graph}, 
\item $\textsc{SteinerPointRemoval}(H_S, \cC, T)$: on input of a graph $H_S$ with outerface $T$, the subroutine returns another graph $\hat{G}$ with the same outerface $T$ but with all Steiner vertices $S$ removed, and every cluster $C_i\in \cC$ is now connected to $T$ directly by a single edge; the expected distortion of $\hat{G}$ is at most a constant factor of $H_S$. 
\end{enumerate}
We now state the distortion guarantees of our main procedure, $\textsc{TreeEmbed}$.
\begin{theorem}
\label{thm:tree-embed}
    For any edge-weighted $k$-outerplanar graph $G$,
    the procedure $\textsc{TreeEmbed}(G)$ returns a distribution $\mathcal{D}$ over tree embeddings of $G$, such that for every pair of vertices $(u,v)$ in $G$ and tree embedding $\hat G$ in the support of $\mathcal{D}$, we have:
    \[
    d_G(u,v) \leq d_{\hat G}(u, v)\quad \text{and}\quad \E_\mathcal{D}[d_{\hat G}(u, v)] \leq f(k)\cdot d_G(u, v),
    \] 
    where $f(k) = c^{\log_2 k}$ for some constant $c>1$.
\end{theorem}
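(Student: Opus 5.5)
We prove \Cref{thm:tree-embed} by strong induction on $k$. The work is to show that one round of \textsc{TreeEmbed} costs only an $O(1)$ factor in expected expansion, after which a recurrence $f(k)\le c\cdot f(\lceil k/2\rceil+1)$ resolves to $c^{O(\log_2 k)}$. The base case $k\le 2$ is the $O(1)$-distortion tree embedding of $2$-outerplanar graphs by \textsc{2-OuterplanarToTree} of~\cite{gnrs99,CGNRS06}. For the induction step we fix the randomness of one round and decompose the output tree. Let $\bar G$ be the $(\lfloor k/2\rfloor+1)$-outerplanar graph built from the outer layers together with each $\hat H^j\setminus\mathcal{C}^j$, and let the clusters $C\in\mathcal{C}$ be the $\lceil k/2\rceil$-outerplanar pieces hanging off terminals by single edges. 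Each $C$ is attached to the rest through a cut vertex. Hence the final tree is obtained by gluing the independently sampled $T_C$ and $T'$ at cut vertices, and distances compose additively along the block structure.

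\textbf{Step 1: reduce to a single intermediate graph.} Let $G^\star$ be the union of $\bar G$ and all clusters, with the attaching edges; equivalently, $G$ with each inner component replaced by $\hat H^j$. Suppose $G^\star$ dominates $G$ deterministically and $\E[d_{G^\star}(u,v)]\le c_0\, d_G(u,v)$. By induction, $T'$ and each $T_C$ are dominating with expected expansion $f(\lfloor k/2\rfloor+1)$ on their pieces. A shortest $u$–$v$ path in $G^\star$ passes through cut vertices, so it splits into subpaths lying in $\bar G$ or in a single $C$. Apply linearity of expectation conditioned on the round's randomness, then the tower rule; since the recursive calls are independent of the path chosen, we get $\E[d_{\hat G}(u,v)]\le c_0\, f(\lfloor k/2\rfloor+1)\, d_G(u,v)$. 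Domination follows the same way, because each tree dominates its piece and $G^\star$ dominates $G$. The recurrence then gives $f(k)\le c^{\log_2 k}$ after adjusting the constant; note $\lfloor k/2\rfloor+1\le 3k/4$ for $k\ge 3$. It suffices to check this on edges $(u,v)$ of $G$, since $d_G$ is the shortest-path metric and the bound extends to all pairs by summing along a shortest path.

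\textbf{Step 2: the $O(1)$ bound for one round.} Edges inside the outer layers are untouched. Edges between the outer part and $T$, and edges within $H=G^j_{\lfloor k/2\rfloor}$, are handled in two stages. First, for \textsc{LowDiameterDecomposition} built from the stochastic shortcut partition of~\cite{CCCLPP25}, we show: (a) domination, from the Steiner edge weights being at least the inter-cluster distances; (b) an edge at scale $i$ is cut with probability $O(w(e)/2^i)$ and, if cut, is replaced by a path of length $O(2^i)$, so $\E[d_{H_S}(u,v)]=O(w(e))$; and (c) the deterministic guarantee $d_{H_S}(x,T)=O(d_H(x,T))$ for every $x$. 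To prove (c), use the shortcut property to walk from $x$ through $O(1)$ scale-$i$ clusters into scale $i-1$, iterate down the scales, and sum the geometric series. Second, for \textsc{SteinerPointRemoval} via \textsc{PlanarSPR} of~\cite{chang2024shortcut}, we invoke a bound of the form $d_{\hat H}(a,b)\le O(d_{H_S}(a,b)+d_{H_S}(a,T)+d_{H_S}(b,T))$ for an edge $(a,b)$ of $H_S$. Attaching each cluster by one edge to its terminal costs at most the distance to $T$.

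\textbf{Main obstacle.} The hard part is combining the two stages in expectation. The SPR error term involves $d_{H_S}(a,T)$ for Steiner endpoints, which is correlated with the event that $(u,v)$ is cut. Guarantee (c) removes this difficulty because it is deterministic. Conditioned on a cut at scale $i$, both endpoints' Steiner nodes lie within $O(2^i)$ of $T$, so the conditional SPR cost is $O(2^i)$. Multiplying by the cut probability $O(w(e)/2^i)$ and summing over the $O(1)$ relevant scales gives $O(w(e))$. Establishing (c) carefully is where I expect the real work. It requires showing the shortcut partition yields paths touching $O(1)$ clusters per scale while the scales are defined relative to $d_H(\cdot,T)$, and that Steiner edge weights are tuned so that $O(1)$ hops per scale cost $O(2^j)$.
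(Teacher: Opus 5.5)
Your proposal is correct and follows essentially the same route as the paper. One round of \textsc{EmbedKOuterplanar} costs only an $O(1)$ factor because the shortcut-partition clustering gives a \emph{deterministic} $O(1)$ bound on the Steiner-vertex-to-$T$ distances (Lemma~\ref{lem:deterministic-gamma}); recursing on the pendant clusters and on $\bar G$, glued along the single attaching edges, then yields $f(k)\le c\,f(\lfloor k/2\rfloor+1)$. The differences are minor and organizational: you charge the SPR cost per cut edge conditioned on its scale, whereas the paper packages the same fact as Property~4 so that the SPR step is deterministically $O(1)$ for all pairs. Separately, fix the stray $\lceil k/2\rceil+1$ in your opening paragraph, which does not decrease at $k=3$; your later $\lfloor k/2\rfloor+1\le 3k/4$ is the right recurrence.
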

We summarize the guarantee of $\textsc{EmbedKOuterplanar}(H, T)$ in the following lemma.

\begin{lemma}
\label{lem:embed-outerplanar}
    Let $H$ be a $(k/2)$-outerplanar graph, and let $T$ denote the outer face of $H$.
    There is a stochastic procedure $\textsc{EmbedKOuterplanar}(H, T)$ that returns a plane graph $\hat H$ with $V(H) = V(\hat H)$ and some $(k/2)$-outerplanar clusters $\mathcal{C}$ such that:
    \begin{itemize}
        \item $\hat H$ comprises the face $T$, some chords between vertices of $T$, and each $(k/2)$-outerplanar cluster in $\mathcal{C}$ is attached to some vertex of $T$ by a single edge.
        \item For every pair of vertices $(u,v)$ in $\hat{H}$, we have $d_{H}(u,v) \le d_{\hat H}(u,v)$ and $\mathbb{E}[d_{\hat H}(u,v)] \le O(1) \cdot d_H(u,v)$.
    \end{itemize}
\end{lemma}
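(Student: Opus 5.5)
We prove \Cref{lem:embed-outerplanar} by composing the two subroutines, tracking distortion through the Steiner graph $H_S$. Write $\delta(v) := d_H(v,T)$.

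\textbf{Step 1: the Steiner graph.} For each scale $i$, let the band $B_i$ consist of the vertices with $\delta(v)\in[2^{i},2^{i+1})$, suitably randomly shifted. Partition each band with a stochastic shortcut partition of \cite{CCCLPP25}, with diameter parameter $\Theta(2^i)$. For each resulting cluster $X$:
\begin{itemize}
\item add a Steiner vertex $s_X$, joined to a root of $X$ on its outermost layer by an edge of weight $\Theta(2^i)$;
\item join $s_X$ and $s_Y$ whenever $X$ and $Y$ are adjacent, with weight $\Theta(2^{\max(i,j)})+d_H(X,Y)$;
\item join $s_X$ to the terminals adjacent to $X$.
\end{itemize}
The cluster interiors become the family $\cC$; each hangs off its Steiner vertex by a single edge. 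Since $H$ is planar and we only contract clusters and add Steiner vertices, $H_S\setminus\cC$ stays planar with outer face $T$. Each cluster is an induced subgraph of $H$ and so is $(k/2)$-outerplanar.

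\textbf{Step 2: distortion of $H_S$.} Non-contraction holds if every Steiner edge weight is at least the $H$-distance between the corresponding vertices. Edge weights may be raised by constants to ensure this, since cluster diameters are $O(2^i)$.

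For expected expansion, it suffices to treat an edge $(u,v)$ of $H$ at scale $i$. It is separated with probability $O(w(u,v)/2^i)$, by the padding of the shortcut partition together with the random shift of the bands. When separated, a detour through the two Steiner vertices costs $O(2^i)$, so the expected stretch is $O(1)$.

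The key deterministic claim is $d_{H_S}(v,T)=O(\delta(v))$ for every $v$. To prove it, take a shortest path from $v$ toward $T$. The shortcut property of the partition shows that this path can be rerouted to pass through $O(1)$ clusters of scale $i$ before dropping to scale $i-1$. Iterating down the scales gives a path in $H_S$ of length $\sum_{j\le i}O(1)\cdot 2^j=O(2^i)$.

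\textbf{Step 3: Steiner point removal.} Apply the planar SPR of \cite{chang2024shortcut} to $H_S\setminus\cC$ with terminals $T$. Every Steiner vertex is mapped to a terminal, producing the outerface $T$ plus chords, and each cluster of $\cC$ is reattached by a single edge to the terminal of its Steiner vertex. We use the SPR guarantee in the form
\[
d_{\hat H}(x,y)\le O(1)\bigl(d_{H_S}(x,y)+d_{H_S}(x,T)+d_{H_S}(y,T)\bigr)
\]
for each edge $xy$ of $H_S$, together with non-contraction. For an edge $s_Xs_Y$ at scale $i$, Step 2 bounds the right-hand side deterministically by $O(2^i)$, which is $O(1)$ times its weight. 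Hence, conditioned on any outcome of the partition, $\hat H$ has $O(1)$ stretch on the edges of $H_S$. Taking expectations over the partition and combining with Step 2 gives $\mathbb E[d_{\hat H}(u,v)]\le O(1)\, d_H(u,v)$, after summing along a shortest path in $H$. Non-contraction follows by composing the two non-contraction guarantees.

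\textbf{Main obstacle.} The hardest part is the deterministic vertex-to-$T$ bound in Step 2. It is needed exactly because the SPR error is conditional on the edge being present, and it requires showing that the shortcut partition's guarantee survives restriction to the bands. I would first try to use a single partition on $H$ with $T$ contracted to a point, with diameter parameter $2^i$ applied per band. I would then argue along a shortest path to $T$, on which $\delta$ decreases monotonically.
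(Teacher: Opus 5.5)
Your proposal is correct and follows essentially the same route as the paper: random rings, a per-ring stochastic shortcut partition, a deterministic bound $d_{H_S}(s_X,T)=O(2^i)$ obtained from a path that meets $O(1)$ clusters per scale, and black-box planar SPR combined with its closeness guarantee $d(x,f(x))\le O(1)\,d(x,T)$. Your $\Theta(2^i)$ root edges and min-distance Steiner weights differ from the paper's only cosmetically, provided you give the reattachment edge $r_Xf(s_X)$ weight $d_{H_S}(r_X,f(s_X))$ as the paper does, which is what non-contraction needs. The obstacle you flag is resolved exactly by the monotonicity you mention: the in-band prefix of a shortest $v$-to-$T$ path is itself a shortest path in $H[B_i]$, so the per-band partition of your Step 1 applies directly and no global partition with $T$ contracted is needed.
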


\begin{observation}
\label{lem:edge-sep}
    Let graph $G$ contain a cut edge whose removal results in graphs $G_1$ and $G_2$. Suppose $G_1$ (resp.\ $G_2$) probabilistically embed into tree metric with distortion $\alpha_1$ (resp.\ $\alpha_2$). Then, $G$ can be probabilistically embedded into tree metric with distortion $\max\{\alpha_1, \alpha_2\}$. 
\end{observation}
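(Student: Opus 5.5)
Let $e=(a,b)$ be the cut edge, with $a\in V(G_1)$ and $b\in V(G_2)$. The plan is to build the embedding of $G$ as a product of the two given distributions. Let $\mathcal{D}_1$ and $\mathcal{D}_2$ be the distributions over tree embeddings of $G_1$ and $G_2$. Sample $f_1\sim\mathcal{D}_1$ (into a tree $T_1$) and $f_2\sim\mathcal{D}_2$ (into a tree $T_2$) independently. Form $T$ as the disjoint union of $T_1$ and $T_2$ plus one new edge $(f_1(a),f_2(b))$ of weight $w(e)=d_G(a,b)$. Then set $f=f_1\cup f_2$. This $T$ is a tree, since two trees joined by a single edge form a tree. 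The two facts that drive everything are the following. Since $e$ is a cut edge, every $u$--$v$ path between $u\in G_1$ and $v\in G_2$ uses $e$, so $d_G(u,v)=d_{G_1}(u,a)+w(e)+d_{G_2}(b,v)$. Similarly, distances within $G_i$ equal distances in $G$, because no shortest path can leave $G_i$ and return without crossing the cut edge twice. The same additive decomposition holds in $T$ by construction.

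For pairs $u,v$ inside the same $G_i$, $d_T(f(u),f(v))=d_{T_i}(f_i(u),f_i(v))$, because the path in $T$ never uses the new edge. Both the contraction bound and the expected-expansion bound are therefore inherited directly, with factor $\alpha_i\le\max\{\alpha_1,\alpha_2\}$.

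For cross pairs, $d_T(f(u),f(v))=d_{T_1}(f_1(u),f_1(a))+w(e)+d_{T_2}(f_2(b),f_2(v))$. By linearity of expectation, the expected value is at most $\alpha_1 d_{G_1}(u,a)+w(e)+\alpha_2 d_{G_2}(b,v)\le\max\{\alpha_1,\alpha_2\}\,d_G(u,v)$. The matching lower bound follows termwise in the same way.

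The only point needing care is how distortion is split into contraction and expansion. In the setting of \Cref{thm:tree-embed} the embeddings are dominating (non-contracting), so the termwise bounds give $d_T\ge d_G$ directly, and the expected expansion is at most $\max\{\alpha_1,\alpha_2\}$. If instead one allows contraction $\gamma_i$ and expected expansion $\beta_i$ separately, the same termwise argument bounds the contraction by $\max_i\gamma_i$ and the expansion by $\max_i\beta_i$. The product of these two maxima could exceed $\max_i\gamma_i\beta_i$, so in that case I would first rescale each $f_i$ so that its contraction equals $1$. This makes the $\alpha_i$ comparable and yields the stated bound.
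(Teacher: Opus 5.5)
Your proof is correct and is the same argument as the paper's: glue $T_1$ and $T_2$ along the cut edge, and observe that tree paths between two vertices on one side stay in that side's tree, while the cut edge is preserved isometrically. You are more careful than the paper's one-line proof, which checks only edges of $G$ and leaves contraction implicit; you treat all pairs, including cross pairs via the additive decomposition through $e$, and you note the rescaling needed to make the contraction/expansion splits of $\alpha_1$ and $\alpha_2$ compatible.
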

\begin{proof}
    Let $G_1$ (resp.\ $G_2$) probabilistically embed into tree $T_1$ (resp.\ $T_2$). Take the tree $T$ formed by gluing together $T_1$ and $T_2$ via the unique cut edge. For any edge $(u,v)$ in $G$, the shortest path in $T$ is entirely contained in either $T_1$ or $T_2$ (except for the cut edge, which embeds isometrically). Therefore, the multiplicative distortion faced is $\max\{\alpha_1, \alpha_2\}$.
\end{proof}

Now we are ready to prove the correctness of the $\textsc{TreeEmbed}$ algorithm.
\begin{proof}[of~\Cref{thm:tree-embed}]
    We proceed by induction on $k$. We are going to prove the distortion guarantee to be $O(c^{\log_2 k})$ for some large constant $c$. 
    The base case is when $k\le 2$, i.e. $G$ is at most a 2-outerplanar graph. 
    The procedure $\textsc{2-OuterplanarToTree}$ (of~\cite{gnrs99,CGNRS06}) performs stochastic embedding of $2$-outerplanar graphs into tree with constant distortion.

    For general $k >2$: let \smash{$G_{\lfloor k/2\rfloor}^1, G_{\lfloor k/2\rfloor}^2, \dots, G_{\lfloor k/2\rfloor}^L$} be the $\lceil k/2\rceil$-outerplanar components of $G$, as defined previously, and let \smash{$T^j = G_{\lfloor k/2\rfloor}^j\cap L_{\lfloor k/2\rfloor}$} denote the corresponding terminal layers. Let $G_\text{out}$ be the graph obtained by removing the layers $L_{\lfloor k/2\rfloor},\ldots, L_{k-1}$ from $G$.
    Let $H^j, \cC^j$ be the resulting graph and set of clusters respectively, returned by the \smash{$\textsc{EmbedKOuterplanar}(G_{\lfloor k/2\rfloor}^j, T^j)$} procedure for each $j\in [L]$, and consider $\bar H^j = H^j\setminus \cC^j$. Let $H$ be the resulting graph formed by gluing together $G_\text{out}$ with $\bigcup H^j$.

    Fix $u$ and $v$ in $G$, and let $P$ be a shortest path from $u$ to $v$ in $G$.  
    Split $P$ into maximal subpaths $P_1, \dots, P_s$ such that each $P_i$ lies either outside all $H^j$s, or entirely in one $H^j$. 
    Denote $x_i$ and $y_i$ the two endpoints of $P_i$.
    If subpath $P_i$ lies outside all $H^j$s, we have $d_H(x_i,y_i) \leq \len_G(P_i)$, since $G[H\setminus \cup H^j]=G_\text{out}$. 
    If subpath $P_i$ lies in $H^j$, we have $\mathbb{E}[d_H(x_i, y_i)]\leq O(1)\cdot d_{H^j}(x_i,y_i) \leq O(1)\cdot \len_G(P_i)$. 
    Therefore,
    \[
    \mathbb{E}[d_H(u,v)]\leq \sum_{i=1}^s d_H(x_i, y_i)\leq \sum_{i=1}^s c\cdot \len_G(P_i) = c\cdot d_G(u,v).
    \]
    To prove non-contraction, we take the $u$--$v$ shortest path $Q$ in $H$, and decompose it into sub-paths $Q_1, Q_2,\ldots$ such that each $Q_i$ lies either completely inside $G_\text{out}$ or in some $H^j$. If $Q_i(x_i, y_i)$ lies in $G_\text{out}$, then $d_H(x_i, y_i)\geq d_G(x_i, y_i)$. Otherwise, if $Q_i$ lies in $H^j$ (for some $j$), then $d_H(x_i, y_i) \ge d_{H^j}(x_i, y_i)\ge d_{G^j_{\lfloor k/2 \rfloor}}(x_i, y_i)\ge d_G(x_i, y_i)$. Therefore, 
    \[
    d_H(u, v) = \sum_i d_H(x_i, y_i) \geq \sum_i d_G(x_i, y_i) \geq d_G(u, v). 
    \]
    
    We first analyze edges in $H$ contained entirely in some cluster $C\in\cC$. For each $\lfloor k/2\rfloor$-outerplanar cluster $C\in \cC$, we run the procedure $\textsc{TreeEmbed}(C)$, and let $G'$ be the resulting graph. Let $\{T_C\}_{C\in\cC}$ be the set of trees corresponding to the clusters $C\in \cC$. For each $(u,v) \in E(T_C)$, by induction hypothesis,
    \begin{align*}
        d_H(u, v) \leq d_{T_C}(u, v) \quad \text{and} \quad \E[d_{T_C}(u, v)] \leq c^{\log_2 \lfloor k/2 \rfloor} \cdot d_H(u, v),
    \end{align*}
    which, in turn, implies that $d_{G}(u, v) \leq d_{T_C}(u, v)$ and $\mathbb{E}[d_{T_C}(u, v)] \leq c^{\log_2 k}\cdot d_{G}(u, v)$. Moreover, each cluster $C\in\bigcup \cC^j$ is now embedded into a tree. The graph $G'$ is $(\lfloor k/2\rfloor + 1)$-outerplanar. 
    This is easy to see, since the layers $\bar L_i$ in the layer decomposition of $H$ are exactly the layers $L_i$ in $H^j$ for $1\leq i\leq \lfloor k/2\rfloor$. Moreover, every vertex in the set $\bigcup_{j=\lfloor k/2\rfloor+1}^{k}L_{j}$ is now part of the layer $L_{\lfloor k/2 \rfloor}$. 

    We now look at the graph $\bar G = G' \setminus \bigcup_{C\in \cC} T_C$, i.e. the graph obtained after removing the trees corresponding to clusters. We can now recursively call $\textsc{TreeEmbed}(\bar G)$ which embeds $\bar G$ into a tree $T'$, and for all $(u, v) \in E(\bar G)$,
    \[d_{G'}(u, v) \leq d_{T'}(u, v) \quad \text{and} \quad \E[d_{T'}(u, v)] \leq c^{\log_2 (\lfloor k/2 \rfloor + 1)} \cdot d_{G'}(u, v)\]
    which implies that $d_{G}(u, v) \leq d_{T'}(u, v)$ and $\mathbb{E}[d_{T'}(u, v)] \leq c^{\log_2 k}\cdot d_{G}(u, v)$. We note that each edge participates in at most $\log \lfloor k/2\rfloor$ calls of $\textsc{TreeEmbed}$, incurring constant factor distortion each time.

    Finally, we can invoke~\Cref{lem:edge-sep} to glue together the tree $T'$ and $\bigcup_{C\in\cC} T_C$ to obtain tree $T$, such that for all $u, v\in V(G)$
    \[d_G(u, v) \leq d_T(u, v) \quad \text{and} \quad \E[d_T(u, v)] \leq c^{\log_2 k} \cdot d_G(u, v),\] as desired.    
\end{proof}

\subsection{Proving \Cref{lem:embed-outerplanar}}

We now state two important lemmas that will let us prove \Cref{lem:embed-outerplanar}.

\begin{restatable}{lemma}{steinerGraph}
\label{lem:g-with-steiner}
Let $G$ be a $(k/2)$-outerplanar graph, and let $T$ denote the outer face of $G$.
Stochastic procedure $\textsc{LowDiameterDecomposition}(G, T)$ 
returns a plane graph $G_S$ with $V(G) \subseteq V(G_S)$ such that:
    \begin{enumerate}
        \item $G_S$ comprises the face $T$, some Steiner vertices (connected by edges to $T$ and to each other), and some $(k/2)$-outerplanar subgraphs called \EMPH{clusters} each attached to a Steiner vertex $v_C$ by a single~edge.
        \item For every pair of vertices $(u,v)$ in $G$, we have $d_G(u,v) \le d_{G_S}(u,v)$.
        \item For every pair of vertices $(u,v)$ in $G$, we have $\mathbb{E}[d_{G_S}(u,v)] \le O(d_G(u,v))$.
        \item
        Let $(a,b)$ be an edge in $G_S$ such that $a$ and $b$ are either Steiner vertices or vertices in $T$. Let $t_a \in T$ (resp.\ $t_b \in T$) be vertex in $T$ that is closest to $a$ (resp.\ $b$) in $G_S$. Then 
        \[w(a,b) \ge \frac{d_{G_S}(a,t_a) + d_{G_S}(t_a,t_b) + d_{G_S}(t_b, b)}{O(1)}\]
        where $w(\cdot)$ is the weight function for $G_S$.
    \end{enumerate}
\end{restatable}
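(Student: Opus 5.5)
We construct $G_S$ explicitly from a multi-scale clustering keyed to distance from $T$, then verify the four properties in order. Write $\delta(v)=d_G(v,T)$. Bucket the non-terminal vertices into scales: $v$ has scale $i$ if $\delta(v)\in[2^{i},2^{i+1})$, with one extra bucket for very small $\delta$. For each scale $i$, apply the stochastic shortcut partition of~\cite{CCCLPP25} with diameter parameter $\Delta_i=\Theta(2^i)$ to the scale-$i$ band; we use the version for planar graphs, restricted to the band, with the outer boundary of the band playing the role of the face. This partition has two properties we rely on. First, every cluster has diameter $O(2^i)$. Second, an edge $(u,v)$ is cut with probability $O(w(u,v)/2^i)$. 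Beyond these, the partition has the \emph{deterministic} shortcut property: any shortest path of length $O(2^i)$ meets only $O(1)$ clusters.

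For each cluster $C$ we add a Steiner vertex $v_C$. Its weights are set as follows.
\begin{enumerate}
\item $v_C$ is joined to a root $r_C\in C$ on the outermost layer of $C$ by an edge of weight $\Theta(2^i)$. We choose the constant large enough to dominate $\mathrm{diam}(C)$ and the cluster's distance to $T$.
\item Two Steiner vertices $v_C,v_{C'}$ whose clusters share a $G$-edge are joined. The edge weight is $\max\{d_G(C,C'),\ \Theta(2^{\max(i,i')})\}$, plus a term making property 4 hold.
\item $v_C$ is joined to a terminal $t$ if $C$ contains a neighbour of $t$, with weight $\Theta(2^i)$.
\end{enumerate}
The inter-cluster edges of $G$ are then deleted, so each cluster hangs off $v_C$ by the single edge $(v_C,r_C)$. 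Because we contract each cluster into a point in the plane, planarity is kept and property~1 follows. The clusters are subgraphs of a $(k/2)$-outerplanar graph and hence are themselves $(k/2)$-outerplanar.

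For property~2 (non-contraction), take any path in $G_S$ and replace each Steiner-vertex hop by a $G$-path. A hop $r_C\to v_C\to v_{C'}\to r_{C'}$ is replaced by a path inside $C$, then the realizing edge between $C$ and $C'$, then a path inside $C'$. This replacement has length at most $\mathrm{diam}(C)+d_G(C,C')+\mathrm{diam}(C')+\ldots$, which is at most the $G_S$ weight because of how the constants were chosen. Terminal edges are handled the same way.

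For property~3, it suffices to handle a single edge $(u,v)$ at scale $i$. With probability $1-O(w/2^i)$ it is uncut, and its cost is unchanged. If it is cut, we route $u\to v_C\to v_{C'}\to v$ at cost $O(2^i)$. The expected stretch is then $O(w/2^i)\cdot O(2^i)=O(w)$. Edges between adjacent scales are treated the same way.

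Property~4 is the main obstacle, and it is where the shortcut partition is essential. We first show deterministically that $d_{G_S}(v_C,T)=\Theta(2^i)$ for every scale-$i$ cluster. The lower bound follows from property~2. For the upper bound, start at any $v\in C$ and follow a shortest path to $T$ in $G$. Within scale $i$ this path has length $O(2^i)$, so by the shortcut property it meets only $O(1)$ scale-$i$ clusters before entering scale $i-1$. The corresponding Steiner walk therefore costs $O(2^i)$. Iterating down the scales gives $\sum_{j\le i}O(2^j)=O(2^i)$. A further subtlety is that the Steiner edges must exist along this walk; this is why we connect clusters across adjacent scales.

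With this bound, for an edge $(a,b)$ between Steiner vertices of scales $i,i'$ we get $d(a,t_a)+d(t_b,b)=O(2^{\max(i,i')})$. By the triangle inequality, $d(t_a,t_b)\le d(t_a,a)+w(a,b)+d(b,t_b)$. So $w(a,b)\ge\Theta(2^{\max})$ already dominates the right-hand side up to a constant, and this is why we insisted on the $\Theta(2^{\max})$ floor in the weights. Terminal–terminal edges are trivial, and terminal–Steiner edges are handled the same way.
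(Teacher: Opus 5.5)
\textbf{Gap: fixed scale bands break Property 3.} Your bands $[2^i,2^{i+1})$ are fixed, and every cluster lies inside one band. So an edge $(u,v)$ with $\delta(u)<2^i\le\delta(v)$ is cut with probability $1$, not $O(w(u,v)/2^i)$, however small $w(u,v)$ is. Once it is cut, $d_{G_S}(u,v)=\Omega(2^i)$ deterministically in your construction. Any $u$--$v$ path in $G_S$ must use $(r_C,v_C)$ and at least one Steiner edge, both of weight $\Theta(2^i)$. You cannot lower the Steiner weights, because the $\Theta(2^{\max(i,i')})$ floor is exactly what your Property 4 argument uses. Hence an edge of weight $1$ straddling $2^i$ has $\mathbb{E}[d_{G_S}(u,v)]=\Omega(2^i)$. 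The sentence ``edges between adjacent scales are treated the same way'' is where the proof breaks.

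The missing ingredient is a random shift of the band boundaries. The paper samples $\alpha$ uniformly from $(0,1)$ and uses rings $R_i=\{v:2^{i-\alpha}\le\delta(v)<2^{i+1-\alpha}\}$. Then, for $\delta(u)\le\delta(v)$, the endpoints land in different rings with probability at most $\log_2(\delta(v)/\delta(u))=O(w(u,v)/\delta(u))$. Combined with the partition's cut probability, every edge is cut with probability $O(w(u,v)/\min\{\delta(u),\delta(v)\})$. The case $w(u,v)\ge\max\{\delta(u),\delta(v)\}/2$ is handled separately, since the $O(\max\{\delta(u),\delta(v)\})$ detour is then already $O(w(u,v))$. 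Otherwise $\delta(u)$ and $\delta(v)$ are within a factor $2$, and the expected cost is $O(w(u,v))$.

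\textbf{Two local repairs.} First, \Cref{lem:ccclpp} gives no bound on how many clusters a shortest path $\pi$ meets. It guarantees a path of $O(\|\pi\|/\Delta)$ hops in the cluster graph between the endpoints' clusters, every cluster of which intersects $\pi$. Your deterministic bound on $d_{G_S}(v_C,T)$ must therefore follow this shortcut path rather than the clusters $\pi$ visits. Apply it to the part of $\pi$ inside the current ring, which is a shortest path in $G[R_i]$, step along $\pi$ into a lower ring, and induct, as in the paper's $v$-to-$T$ preserving path. Consecutive clusters on the shortcut path both meet $\pi$, so each Steiner edge still costs $O(2^i)$ and the geometric sum goes through.

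Second, giving $(v_C,t)$ weight $\Theta(2^i)$ violates Property 2. A vertex $x$ with $\delta(x)\approx 2^i$ can be adjacent to a far terminal $t$ by an edge that is itself a shortest path of length much larger than $2^i$. Then $d_{G_S}(x,t)=O(2^i)<d_G(x,t)$. Terminal edges need a distance term just like Steiner--Steiner edges. The paper uses one formula for both, $w(s_i,s_j)=\max_{x\in C_i,y\in C_j}d_G(x,y)+\max_{v\in C_i\cup C_j}d_G(v,T)$. The max-distance term makes non-contraction immediate, and the second term is what Property 4 needs.
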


\begin{restatable}{lemma}{SPR}
\label{lem:SPRB}
    There is an absolute constant $c>1$ such that the following holds:
    Let $\hat{G}$ be the graph formed after the Steiner point removal process.
    For any two vertices $u, v\in V(\hat{G})$, we have:
    \begin{align*}
        d_{\hat{G}}(u, v) \geq d_{G_S}(u, v) 
        \quad \text{and} \quad
        \mathbb{E}[d_{\hat{G}}(u, v)] \leq c \cdot d_{G_S}(u, v)
    \end{align*}
\end{restatable}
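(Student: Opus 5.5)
The plan is to use the planar Steiner point removal of~\cite{chang2024shortcut} as a black box on the graph $H_S\setminus\cC$ with terminal set $T$, and then reattach the clusters. By \Cref{lem:g-with-steiner}(1), $H_S\setminus\cC$ consists of the face $T$ together with Steiner vertices drawn inside it, so every terminal lies on one face. We use the following form of the guarantee of \textsc{PlanarSPR}. It returns a random outerplanar graph $\hat H_0$ on $T$ together with an assignment $\phi$ of each Steiner vertex $s$ to a terminal $\phi(s)\in T$. This assignment comes from a partition of $H_S\setminus\cC$ into connected clusters, one per terminal. The guarantees are $d_{\hat H_0}(t,t')\ge d_{G_S}(t,t')$ and $d_{\hat H_0}(t,t')\le O(1)\, d_{G_S}(t,t')$ for terminals $t,t'$. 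For every Steiner vertex $s$, $\mathbb{E}[d_{G_S}(s,\phi(s))]\le O(1)\, d_{G_S}(s,T)$ (and preferably deterministically $O(1)\, d_{G_S}(s,T)$ after rescaling). Each cluster $C$ that hangs off a Steiner vertex $v_C$ is attached to $\phi(v_C)$ by an edge of weight $d_{G_S}(v_C,\phi(v_C))+w(v_C,C)$. Non-contraction then follows from the triangle inequality in $G_S$: any $u$--$v$ path in $\hat G$ splits into terminal-to-terminal pieces and cluster attachment edges, and each piece has weight at least the corresponding $G_S$ distance.

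For expansion, fix $u,v\in V(\hat G)$ and a shortest $u$--$v$ path $P$ in $G_S$. Vertices inside clusters are reached only through their attachment edge, so it suffices to bound stretch along the edges of $P$ that run between Steiner vertices or terminals, plus the two cluster-entry edges. Take such an edge $(a,b)$. In $\hat G$ we route $a\to\phi(a)\to\phi(b)\to b$, where a terminal is its own image, and obtain
\[
d_{\hat G}(a,b)\le d_{G_S}(a,\phi(a))+O(1)\,d_{G_S}(\phi(a),\phi(b))+d_{G_S}(b,\phi(b)).
\]
Write $t_a,t_b$ for the closest terminals to $a,b$, and use $d(\phi(a),\phi(b))\le d(\phi(a),a)+w(a,b)+d(b,\phi(b))$. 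Everything is then bounded by $O(1)\,(d(a,t_a)+w(a,b)+d(b,t_b))$, which is $O(1)\,w(a,b)$ by \Cref{lem:g-with-steiner}(4). Summing over the edges of $P$ gives $\mathbb{E}[d_{\hat G}(u,v)]\le c\,d_{G_S}(u,v)$. The cluster-entry edges are handled the same way, because the attachment edge to $v_C$ has weight comparable to $d(v_C,T)$ by property (4) applied to $v_C$'s edges.

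The main obstacle is the expectation. The per-edge bound must hold in expectation over the SPR randomness alone, with $G_S$ fixed, and the term $d(a,\phi(a))$ must be $O(d(a,T))$ for each vertex, not just on average. This is why property (4) is stated in terms of distances in $G_S$ rather than in $G$. I would first check that the SPR of~\cite{chang2024shortcut} gives the vertex-to-assigned-terminal bound deterministically, as its shortcut-partition construction suggests. If it gives this bound only in expectation, linearity still suffices, because $G_S$ is fixed when we condition, so no correlation with the cutting of $(a,b)$ arises.
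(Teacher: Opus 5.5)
Your proposal is correct and follows essentially the paper's argument. You walk along a shortest $u$--$v$ path in $G_S$ and replace each Steiner vertex or terminal $x_i$ by its assigned terminal $\phi(x_i)$. You then bound each hop $d_{\hat G}(\phi(x_i),\phi(x_{i+1}))$ by $O(1)\,w(x_i,x_{i+1})$, using the deterministic $O(1)$ terminal distortion of \textsc{PlanarSPR}, the vertex-to-assigned-terminal bound (Claim~2.7 of \cite{chang2024shortcut}, which the paper uses as deterministic, so your expectation fallback is unnecessary), and Property~4. Non-contraction comes from the triangle inequality in $G_S$.

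The differences are cosmetic. Your detour $\phi(a)\to a\to b\to\phi(b)$ is slightly leaner than the paper's $\phi(a)\to a\to t_a\to t_b\to b\to\phi(b)$, since it needs only $d_{G_S}(a,T)=O(w(a,b))$ from Property~4. Also, Steiner vertices are absent from $\hat G$, so the per-edge quantity should be written $d_{\hat G}(\phi(a),\phi(b))$ rather than $d_{\hat G}(a,b)$.
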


\begin{proof}[of \Cref{lem:embed-outerplanar}]
$\textsc{EmbedKOuterplanar}(H, T)$ first runs \textsc{LowDiameterDecomposition}$(H,T)$ to obtain the Steiner graph $H_S$ and clusters $\cC$, 
then perform \textsc{SteinerPointRemoval}$(H_S,\cC, T)$ to produce the output graph $\hat H$.
By \Cref{lem:g-with-steiner}, the plane graph $H_S$ comprises the same outer face $T$ and some $(k/2)$-outerplanar clusters, each attached to a Steiner vertex $v_C$ by a single edge.
Because $V(H_S) = V(H) \cup \Set{v_C : C \in \cC}$ and the Steiner point removal procedure removes all vertices in $\Set{v_C : C \in \cC}$, $V(\hat{H}) = V(H)$.

Consider the graph $\hat{H} \setminus \bigcup_{C \in \cC} C$.  Because \textsc{SteinerPointRemoval} only removes $\Set{v_C : C \in \cC}$ from $H_S$, this graph should be equivalent  to performing SPR on $H_S \setminus \bigcup_{C \in \cC} C$, which must be an outerplanar graph on $V(T)$, which comprises the outer face $T$ and some chords between vertices of $T$.
If we put the clusters in $\cC$ back, now in $\hat{H}$ each $(k/2)$-outerplanar cluster in $\cC$ is attached to some vertex of $T$ by a single edge.

As for the distance between an arbitrary pair of vertices $(u,v)$ in $\hat{H}$, by \Cref{lem:g-with-steiner} we have 
\[
d_{H}(u,v) \le \mathbb{E}[d_{H_S}(u,v)] \le O(1) \cdot d_H(u,v).
\]
By \Cref{lem:SPRB}, it follows that
\[
d_{H}(u,v) \le \mathbb{E}[d_{\hat{H}}(u,v)] \le O(1) \cdot d_{H}(u,v).
\]
\aftermath
\end{proof}
It remains to prove the correctness of \Cref{lem:g-with-steiner} and \Cref{lem:SPRB}.

\section{Creating a Steiner Graph by Clustering}\label{sec:ldd}
In this section, we prove \Cref{lem:g-with-steiner}.

\steinerGraph*
We provide some intuition for the proof of \Cref{lem:g-with-steiner}.
We partition the graph into \emph{clusters}, in a way that cuts any edge with low probability;  from this clustering, we construct the \emph{Steiner graph} by replacing each cluster $C$ with a Steiner vertex $v_C$ and attaching $C$ to $v_C$ with a single edge.
The first 3 properties of \Cref{lem:g-with-steiner} follow by clustering the graph with an appropriate \emph{low-diameter decomposition} at various scales. The 4th property is more difficult to prove. We will end up setting the weight of an edge $(a,b)$ in $G_S$ as $w(a,b) \ge \max \set{d_G(a,T), d_G(b,T)}$, so it suffices to show that $O(1) \cdot d_G(a,T) \ge d_{G_S}(a,T)$ and $O(1) \cdot d_G(b,T) \ge d_{G_S}(b,T)$. While Property (3) says that these inequalities hold \emph{in expectation}, it is crucial that we obtain a deterministic bound. As described in the introduction, we do this by exploiting the \emph{stochastic shortcut partition} of \cite{CCCLPP25}.

The rest of this section is devoted to the proof of \Cref{lem:g-with-steiner}. 
Let $G = (V, E, w_G)$ with outer face $T$. The vertices in $T$ are called \EMPH{terminals}.

\subsection{Step 1: Creating rings} 

Sample a value \EMPH{$\alpha$} uniformly at random from the range $(0,1)$.
We define a set $\EMPH{\cR} \coloneqq (R_0, R_1, \ldots)$ of \EMPH{rings} as follows: for every $i \in \mathbb{N}_{0}$, define $i$-th ring 
\[
\textcolor{BrickRed}{\emph{$R_i$}} \coloneqq \set{v \in V(G) : 2^{i- \alpha} \le d_G(v,T) < 2^{i + 1 - \alpha}}.
\] 
Without loss of generality every edge in $G$ has weight at least 1, the set $\cR$ is a partition of $V(G) \setminus V(T)$.
\begin{lemma}
\label{lem:rings}
    For every edge $(u,v)$ in $G$, the endpoints belong to different rings with probability at most 
    \[ 
    O(1) \cdot \frac{w_G(u,v)}{\min \set{d_G(u,T), d_G(v,T)}}.
    \]
\end{lemma}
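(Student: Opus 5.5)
Our plan is to work in logarithmic scale, where the random shift $\alpha$ becomes a uniform translation of the ring boundaries. Fix an edge $(u,v)$ and write $a = d_G(u,T)$ and $b = d_G(v,T)$, assuming without loss of generality $a \le b$. By the triangle inequality, $b - a \le w_G(u,v)$.

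If $a=0$, then $u\in T$ is not in any ring. In this case the claimed bound is infinite (or trivially at least $1$), so there is nothing to prove. Assume $a \ge 1$, using the normalization that all edge weights are at least $1$; then $u$ and $v$ both lie in $V(G)\setminus V(T)$.

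The key observation is that $x \in R_i$ exactly when $\lfloor \log_2 d_G(x,T) + \alpha \rfloor = i$. Thus $u$ and $v$ lie in different rings if and only if some integer lies in the half-open interval $(\log_2 a + \alpha, \log_2 b + \alpha]$, i.e.\ some ring boundary $2^{j-\alpha}$ falls in $(a,b]$.

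Since $\alpha$ is uniform on $(0,1)$, the points $\log_2 a + \alpha$ are uniformly distributed modulo $1$. Hence the probability that an interval of length $\ell = \log_2(b/a)$ contains an integer is at most $\min\{1,\ell\}$. It remains to bound
\[
\log_2\frac{b}{a} = \log_2\!\Bigl(1 + \frac{b-a}{a}\Bigr) \le \frac{1}{\ln 2}\cdot\frac{b-a}{a} \le \frac{1}{\ln 2}\cdot \frac{w_G(u,v)}{\min\{d_G(u,T),d_G(v,T)\}},
\]
using $\ln(1+x)\le x$. This yields the lemma with constant $1/\ln 2$.

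The main point requiring care is the equivalence between ``different rings'' and ``a boundary point falls between the two distances''. This needs the rings to be consecutive half-open intervals that tile $[2^{-\alpha},\infty)$. It also needs $a \ge 2^{-\alpha}$, so that both endpoints lie in some ring; this holds because $a\ge 1$. Beyond this, the argument is a routine one-dimensional computation.
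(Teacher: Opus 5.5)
Your proof is correct and takes essentially the same route as the paper. You pass to $\log_2$ scale, note that the uniform shift $\alpha$ places a ring boundary between $d_G(u,T)$ and $d_G(v,T)$ with probability at most $\log_2(b/a)$, and bound this by $O(1)\cdot(b-a)/a$ via $\ln(1+x)\le x$. Your version is slightly more careful than the paper's: it uses the correct orientation $(a,b]$ for the boundary interval, handles $a=0$ explicitly, and obtains the sharper constant $1/\ln 2$ instead of $2$.
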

\begin{proof}
Let $\gamma_u = d_G(u,T)$ and $\gamma_v = d_G(v,T)$, and assume without loss of generality that $\gamma_u \le \gamma_v$. Triangle inequality implies that $d_G(u,v) \ge \gamma_v - \gamma_u$. We aim to show that $(u,v)$ is cut with probability at most
\(O(1) \cdot \frac{\gamma_v - \gamma_u}{\gamma_u}.\)
Indeed, edge $(u,v)$ is cut iff there exists $i\in\N$ such that $2^{i - \alpha} \in [\gamma_u, \gamma_v)$.
Rearranging, we find that the bad event occurs if there exists $i \in \N$ such that $i - \alpha \in [\log \gamma_u, \log \gamma_v) $.
As $\alpha$ is uniform over $(0,1)$, the bad event occurs
with probability equal to the length of the interval $[\log \gamma_v, \log \gamma_u)$; that is, the probability is
\[\log \gamma_v - \log \gamma_u = \log \frac{\gamma_v}{\gamma_u} = \log \left(1 + \frac{\gamma_v - \gamma_u}{\gamma_u}\right) \le 2 \cdot \frac{\gamma_v - \gamma_u}{\gamma_u}.\]
\aftermath
 \end{proof}

\subsection{Step 2: Chopping rings into clusters}

We now partition each ring into a set of clusters. The final guarantee on the clusters is given by \Cref{lem:clustered-rings}. We remark that last guarantee of \Cref{lem:clustered-rings} is somewhat awkward, but is essential to proving \Cref{lem:deterministic-gamma} later: in the Steiner graph, the distance between a vertex $v$ and $T$ is preserved \emph{deterministically} with only $O(1)$ distortion.

Given a partition of $G$ into clusters $\cC$, define the \EMPH{cluster graph $G/\cC$} to be the graph obtained by contracting each cluster $C_i \in \cC$ into a \EMPH{Steiner vertex $s_i$}. 
Sometimes we abuse notation and don't distinguish between the cluster $C_i$ and the Steiner vertex $s_i$ representing the cluster. We say that a cluster $C_i$ (or its Steiner vertex $s_i$) has \EMPH{nontrivial intersection} with a path $\pi$ in $G$ if $C_i \cap \pi \neq \varnothing$.

\begin{lemma}
\label{lem:clustered-rings}
    There is a stochastic procedure that partitions $V(G)$ into \EMPH{clusters $\cC$}
    such that:
    \begin{enumerate}
        \item Every terminal $t \in T$ is a singleton cluster $\set{t} \in \cC$; collectively these clusters are called \EMPH{terminal clusters}. 
        \item For \EMPH{non-terminal cluster} $C \in \cC$, there is some $i$ such that $C$ is contained in ring $R_i \in \cR$ (meaning that $d_G(v,T) \in [2^{i-\alpha}, 2^{i+1-\alpha})$ for every $v \in C$) and $\diam(C) \le 2^i$.
        \item Every edge $(u,v)$ is a \EMPH{cut edge}\footnote{that is, $u$ and $v$ belong to different clusters} with probability at most $O(1) \cdot \frac{d_G(u,v)}{\min \set{d_G(u,T), d_G(v,T)}}$.
        \item
        For any vertex $v$ in $V(G) \setminus V(T)$, 
        there exists a path $P$ in the cluster graph $G / \cC$ between the cluster containing $v$ and some terminal cluster with the following properties
        \begin{itemize}
            \item
            Let $i$ denote the index of the ring $R_i \in \cR$ with $v \in R_i$. Path $P$ comprises a single terminal cluster as well as clusters in rings $R_j$ with $j \le i$.
            \item For every $j \in [i]$, there are only $O(1)$ clusters in the ring $R_j$ that are in $P$.
            \item Let $(s_1, s_2)$ be an edge in  $P$, let $C_1$ and $C_2$ be the clusters associated with $s_1$ and $s_2$, and let $k$ be the maximum index such that either $C_1$ or $C_2$ is in the ring $R_k$. Then there exists $a \in C_1$ and $b \in C_2$ with $d_G(a,b) \le O(2^k)$.
        \end{itemize}
        Path $P$ is called a \EMPH{$v$-to-$T$ preserving path}.
    \end{enumerate}
\end{lemma}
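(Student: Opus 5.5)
We will build the clusters ring by ring, using the stochastic shortcut partition of \cite{CCCLPP25} at scale $2^i$ inside each ring $R_i$. Terminals are declared singleton clusters, which gives property 1. For each $i$, we apply the shortcut partition to the planar graph $G$ (or to the subgraph induced by $R_i$ together with a thin neighbourhood, to keep planarity and distances honest) with diameter parameter $\varepsilon 2^i$ for a small constant $\varepsilon$. We then intersect its clusters with $R_i$ and take connected pieces. Recall the guarantee of the shortcut partition: clusters have strong diameter $O(\varepsilon 2^i)$; an edge $(u,v)$ is cut with probability $O(w(u,v)/(\varepsilon 2^i))$; and for any two vertices $x,y$ at distance at most $2^i$, some shortest $x$--$y$ path in $G$ meets only $O(1)$ clusters. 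We choose $\varepsilon$ so that $\diam(C)\le 2^i$, which gives property 2.

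\textbf{Property 3.} An edge $(u,v)$ is cut either because its endpoints fall in different rings, or because they lie in the same ring $R_i$ and are separated by the partition. The first event has probability $O(1)\cdot w(u,v)/\min\{d_G(u,T),d_G(v,T)\}$ by \Cref{lem:rings}. For the second, we condition on $\alpha$ and use that $d_G(u,T)=\Theta(2^i)$ for $u\in R_i$, so the bound $O(w/2^i)$ has the same form. A union bound finishes this property.

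\textbf{Property 4.} This is the main step. Fix $v\in R_i$ and a shortest path $Q$ from $v$ to $T$ in $G$. Along $Q$, $d_G(\cdot,T)$ decreases strictly and exactly like arc length, so $Q$ passes monotonically through $R_i, R_{i-1},\dots,R_0$ and then hits $T$. The portion $Q_j=Q\cap R_j$ is a shortest path of length at most $2^{j+1-\alpha}-2^{j-\alpha}\le 2^j$. If the partition at scale $j$ had been applied to all of $G$, the shortcut guarantee would give, for the endpoints of $Q_j$, a replacement path meeting $O(1)$ clusters. The difficulty is that this replacement path may leave $R_j$ and wander into other rings, and so would not be a path in $G/\cC$ with the prescribed ring profile.

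To handle this, we will not insist on following $Q$. Instead we build $P$ greedily, scale by scale. Starting from the cluster of $v$, we use the shortcut property at scale $i$ to reach, within $O(1)$ clusters, a vertex of $R_{i-1}$; it suffices to head toward the endpoint of $Q_i$, and to stop as soon as the ring index drops. Then we recurse from there at scale $i-1$. Whenever the shortcut path enters a higher ring $R_{j'}$ with $j'>j$, we must rule it out. Since the path has length $O(2^j)$ and starts at distance roughly $2^j$ from $T$, it can only reach rings of index at most $j+O(1)$. If this is not enough, we will apply the partition at scale $j$ to the union of $O(1)$ consecutive rings and assign each cluster to the ring of its root; this constant blow-up is tolerated by property 2 after a constant rescale. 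Adjacent clusters along $P$ then contain adjacent vertices, or vertices joined by a shortcut segment of length $O(2^k)$, which gives the third bullet of property 4.

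The main obstacle is reconciling the shortcut-partition guarantee, which is stated for a single partition of one planar graph, with our layered construction. We must ensure that the $O(1)$-cluster shortcut path at scale $j$ meets only $O(1)$ clusters of each ring and does not climb to higher rings. We will first try to apply the shortcut partition separately to the planar graph obtained by contracting everything at distance below $2^{j-\alpha}$ from $T$ into the outer face. In that graph $R_j$ is the innermost band, and the "reach the contracted face" target is exactly a shortest path to the lower ring.
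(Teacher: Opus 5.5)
There is a genuine gap in Property 4, and it comes from misremembering the shortcut-partition guarantee. \Cref{lem:ccclpp} does not promise that some shortest path meets only $O(1)$ clusters. What it says is that for \emph{any} shortest path $\pi$ in the host graph $H$, there is an $O(d_H(u,v)/\Delta)$-hop path in $H/\cC$ all of whose clusters meet $\pi$. Because you miss this, you raise an obstacle (cluster paths wandering into other rings) and then list tentative fixes ("if this is not enough\ldots", "we will first try\ldots"), none of which you carry through.

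The fixes also conflict with the statement:
\begin{itemize}
\item Partitioning a union of $O(1)$ consecutive rings and assigning each cluster to the ring of its root violates Property 2. Property 2 requires every non-terminal cluster to lie \emph{inside} a single ring, and no rescaling repairs containment. Moreover, overlapping unions for different $j$ do not yield one partition of $V(G)$.
\item Your initial construction (partition all of $G$, intersect with $R_i$, take connected pieces) loses strong diameter. Short paths inside $G[C]$ may leave $R_i$, so a piece of $C\cap R_i$ can have large diameter in its own induced subgraph. Strong diameter is what is used later, e.g.\ $d_{G[C]}(v,r)=O(\gamma)$ in \Cref{lem:deterministic-gamma}.
\item In the same construction, one cluster met by a shortcut path can split into many pieces, so the $O(1)$ hop count is lost.
\end{itemize}

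The missing step is short. Apply \Cref{lem:ccclpp} directly to $H=G[R_i]$ with $\Delta=2^i$; an induced subgraph of a planar graph is planar, so no thickening is needed.
\begin{itemize}
\item For $v\in R_i$, take a shortest path $\pi$ in $G$ from $v$ to $T$. Let $x$ be its first vertex outside $R_i$ and $x'$ the vertex just before it.
\item The subpath $\pi[v:x']$ has all its vertices in $R_i$ and is shortest in $G$. Hence it is shortest in $G[R_i]$, since $d_{G[R_i]}\ge d_G$. Its length is less than $2^{i+1-\alpha}=O(2^i)$.
\item The lemma then gives an $O(1)$-hop path $P_i$ in $G[R_i]/\cC$ from the cluster of $v$ to the cluster of $x'$. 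Its clusters lie in $R_i$ automatically, so nothing can climb to higher rings.
\item Each cluster of $P_i$ meets $\pi$. This gives the third bullet: choose $a,b\in\pi$, so $d_G(a,b)\le|\pi|=O(2^i)$.
\item The edge $(x',x)$ of $G$ joins $P_i$ to the cluster of $x$, which is a terminal cluster or lies in a lower ring. Induction on the ring index, rather than a greedy walk, supplies $P_{<i}$.
\item Since the rings are disjoint, $P_i\circ P_{<i}$ is a genuine path with $O(1)$ clusters per ring.
\end{itemize}
Properties 2 and 3 then follow as you argued, taking the strong-diameter and cut-probability bounds directly in $G[R_i]$. Note that $d_{G[R_i]}(u,v)\le w(u,v)$ for an edge inside $R_i$.
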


The key lemma for the construction of \Cref{lem:clustered-rings} is the \emph{stochastic} shortcut partition of \cite{CCCLPP25}.

\begin{lemma}[Lemma 6.4 of \cite{CCCLPP25}]
\label{lem:ccclpp}
    Let $H$ be a planar (or minor-free) graph, and let $\Delta$ be a parameter. There is a stochastic procedure that partitions $V(H)$ into a set of clusters $\cC$, such that 
    \begin{itemize}
        \item every cluster in $\cC$ has (strong) diameter at most $\Delta$,
        \item every edge $(u,v)$ in $H$ is cut with probability at most $O(1) \cdot d_H(u,v)/\Delta$, 
        \item for every pair of vertices $u,v$ in $H$ and any shortest path $\pi$ in $H$ between $u$ and $v$, there exists a path $P$ in the unweighted cluster graph $H \setminus \cC$ between the cluster containing $u$ and the cluster containing $v$, such that (1)~$P$ has length at most $O(d_H(u,v) / \Delta)$, and (2) every cluster in $P$ has nontrivial intersection with $\pi$.
    \end{itemize}

\end{lemma}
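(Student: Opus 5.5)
Since this is Lemma~6.4 of~\cite{CCCLPP25}, I would follow the structure of that construction rather than build a partition from scratch. The plan is to start from a \emph{buffered cop decomposition} of $H$ with buffer parameter $\Theta(\Delta)$. This is a partition of $V(H)$ into \emph{supernodes} $\eta$, each with a \emph{skeleton}: a tree made of $O(1)$ shortest paths of the graph that remains when $\eta$ is created. It has two properties I would use. First, every vertex of $\eta$ is within distance $O(\Delta)$ of the skeleton of $\eta$. Second (the buffer property), every vertex $v$ has only $O(1)$ supernodes (bounded in terms of the excluded minor; for planar graphs a constant) within distance $\Theta(\Delta)$ in its \emph{domain}, and these supernodes form a clique in the cop-decomposition tree.

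Next I would refine each supernode into clusters. Cut every skeleton path into consecutive segments of length $\Delta/10$, with a uniformly random offset. Then assign each vertex of $\eta$ to a segment using a random-shift (exponential-clock) ball carving inside the induced subgraph of the domain of $\eta$. Each vertex is within $O(\Delta)$ of the skeleton and each segment has length $O(\Delta)$, so the clusters have strong diameter $O(\Delta)$; rescaling $\Delta$ by a constant gives the first bullet. For the cut probability, an edge $(u,v)$ is cut only if it crosses between supernodes or between segments in the carving. The random offsets and the exponential shifts make each of these events happen with probability $O(d_H(u,v)/\Delta)$. The buffer property bounds the number of competing supernodes by $O(1)$, so a union bound gives the second bullet.

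The shortcut bullet is the main obstacle, because it must hold \emph{deterministically} for every shortest path $\pi$. My plan is to split $\pi$ into $\lceil d_H(u,v)/\Delta\rceil$ subpaths of length at most $\Delta$ and show that each subpath meets only $O(1)$ clusters. The walk $P$ in the cluster graph then lists the clusters met by $\pi$ in order of first appearance, shortcutting repetitions. Consecutive clusters on this list are joined by an edge of $\pi$, so they are adjacent in the cluster graph, and by construction every cluster on $P$ meets $\pi$. The length bound is $O(d_H(u,v)/\Delta)$, with an additive $O(1)$ absorbed when $d_H(u,v)<\Delta$.

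For the per-subpath bound, fix a subpath $\sigma$ of length at most $\Delta$. By the buffer property, $\sigma$ meets only $O(1)$ supernodes. Within one supernode, the clusters met by $\sigma$ correspond to skeleton segments lying within distance $O(\Delta)$ of $\sigma$. Since each skeleton path is a shortest path, only $O(1)$ of its segments of length $\Delta/10$ can lie within $O(\Delta)$ of a set of diameter $\le\Delta$, and there are $O(1)$ skeleton paths. The subtle point I expect to need care with is that the exponential-shift carving could assign a vertex of $\sigma$ to a far-away segment. I would handle this by truncating the shifts at $\Theta(\Delta)$, so a vertex can only join a segment within $O(\Delta)$ of it. This truncation costs only a constant in the cut probability.
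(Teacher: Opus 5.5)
The paper does not prove this statement. It imports it from \cite{CCCLPP25} as a black box, so your reconstruction has to stand on its own. Starting from the buffered cop decomposition is a reasonable plan, but two steps fail as written.

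\textbf{Cut probability.} All of your randomness (skeleton offsets, shifted carving) acts inside a supernode; none of it changes which supernode a vertex belongs to. As you describe it, the supernode partition is fixed before any randomness is drawn. So every edge whose endpoints lie in different supernodes is cut with probability $1$, however small $d_H(u,v)/\Delta$ is. Your sentence attributing the supernode-crossing event to the offsets and shifts is therefore false. To get the second bullet you must randomize the supernodes themselves (e.g.\ grow each skeleton to a random radius). You then have to show that each edge is threatened by only $O(1)$ random boundaries, and that the skeleton and buffer properties survive this randomness. That is the genuinely stochastic ingredient, and it is missing.

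\textbf{Shortcut property.} Your argument needs every subpath $\sigma$ of length at most $\Delta$ to meet only $O(1)$ \emph{distinct} clusters. This is false for partitions of this type. Nothing in the decomposition rules out the following planar configuration:
\begin{itemize}
\item vertices $b_1,\dots,b_m\in\eta$ near the boundary of a supernode $\eta$;
\item for each $i<m$, a vertex $c_i\notin\eta$ adjacent only to $b_i$ and $b_{i+1}$, by edges of weight $\varepsilon\le\Delta/(2m)$;
\item $b_1c_1b_2\cdots c_{m-1}b_m$ is a shortest path.
\end{itemize}
Each $c_i$ is an isolated vertex of $G[\mathrm{dom}(\eta)\setminus\eta]$, hence its own child domain, supernode and cluster. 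The path has length at most $\Delta$ yet meets at least $m-1$ distinct clusters.

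The property you quote cannot constrain descendant supernodes, since this configuration violates it. What a buffered cop decomposition controls is the \emph{ancestor} supernodes near a vertex, and that does not bound how many supernodes a short path visits.

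The within-supernode count also fails. Skeleton paths are shortest in $G[\mathrm{dom}(\eta)]$, and you carve with that metric, but $\sigma$ has small diameter only in $H$. So $\sigma$ can leave $\mathrm{dom}(\eta)$ through an ancestor and re-enter $\eta$ at a point far away in $G[\mathrm{dom}(\eta)]$. Its vertices in $\eta$ then need not be assigned to $O(1)$ segments.

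What can hold is a hop bound for the greedy walk: from the current cluster, jump to that cluster's \emph{last} vertex on $\pi$, then step to the next vertex. In the example, if the $b_i$ share a cluster, this walk skips every $c_i$. Proving $O(1)$ hops per length-$\Delta$ stretch needs a structural argument about how $\pi$ revisits supernodes and skeleton segments. That argument is the core of the shortcut-partition proofs, and your plan does not supply it.

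Separately, ordering clusters by first appearance does not give a walk in the cluster graph, because consecutive entries need not be adjacent. Use the last-occurrence greedy instead.
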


\begin{proof}[of \Cref{lem:clustered-rings}]

For every terminal $t \in T$, create a singleton cluster $C = \set{t}$.
For every ring $R_i \in \cR$,
apply \Cref{lem:ccclpp} on $G[R_i]$ to construct a partition of $R_i$ into clusters. Overall, the set $\cC$ of all clusters is a partition of $V(G)$.
Properties 1, 2, and 3 of \Cref{lem:clustered-rings} are immediate from the construction of $\cC$. It remains to prove property 4. 
We prove, by induction on the ring index $i$, that for every vertex $v$ in $R_i$, there exists a $v$-to-$T$ preserving path $P$.  

Let $\pi$ be a shortest path in $G$ from $v$ to some terminal $t \in T$. Every vertex on $\pi$ is no farther from $T$ than $v$; thus, every vertex on $\pi$ (other than $t$) is in some ring $R_j$ with $j \le i$.
Walking along $\pi$ from $u$ to $T$, let \EMPH{$x$} be the first vertex on $\pi$ that is not in $R_i$, and let \EMPH{$x'$} be the vertex immediately before $x$ in $\pi$.

We define a path \EMPH{$P_{<i}$} in the cluster graph as follows. If $x$ is a terminal vertex, ie $x \in T$, then define $P_{<i}$ to be the trivial path that contains only the singleton cluster $\set{x}$. Otherwise, by induction, define $P_{<i}$ to be some $x$-to-$T$-preserving path (which runs between the cluster containing $x$ and some terminal cluster).
We now define a path \EMPH{$P_i$} between the cluster containing $v$ and the cluster containing $x'$. Let $\pi[v:x']$ denote the subpath of $\pi$ between $v$ and $x'$; observe that it is a shortest path between $v$ and $x'$ in $G[R_i]$, with length at most $\norm{\pi} \le 2^{i+1 - \alpha} = O(2^i)$.
Let $G[R_i] / \cC$ denote the (unweighted) cluster graph obtained from $G[R_i]$ by contracting every cluster in $\cC$ that is in $R_i$.
By \Cref{lem:ccclpp}, there is some path $P_i$ in the $G[R_i] / \cC$ between the cluster containing $v$ and the cluster containing $x'$, such that $P_i$ has length at most 
$O(\norm{\pi[v:x']}/2^i) = O(1)$, and every cluster in $ P_i$ has nontrivial intersection with $\pi[v:x']$. 
We claim that the concatenation
\[\textcolor{BrickRed}{\emph{$P$}} \coloneqq  P_i \circ  P_{<i} \]
is a $v$-to-$T$ preserving path.
First observe that $P$ is indeed a path in the cluster graph $G / \cC$: this is because the cluster containing $x'$ is adjacent in $G / \cC$ to the cluster containing $x$ (and thus $P$ is a walk), and $P_i$ is disjoint from $P_{<i}$ (and thus $P$ is a path).
Now, clearly $P$ contains a single terminal cluster and clusters in rings $R_j$ with $j \le i$. Because $P_i$ has length $O(1)$, there are only $O(1)$ clusters in ring $R_i$ that are in $P$; by the inductive construction of $P_{<i}$, for every $j < i$ there are only $O(1)$ clusters in ring $R_j$ that are in $P$.
Finally, let $(s_1, s_2)$ be an edge in  $P$, let $C_1$ and $C_2$ be the associated clusters, and let $k$ be the maximum index such that either $C_1$ or $C_2$ is in the ring $R_k$; we need to show that there exists $a \in C_1$ and $b \in C_2$ with $d_{G}(a,b) \le 2^k$. 
Indeed, if $(s_1, s_2)$ is an edge in $P_{<i}$, then the claim is true by induction. Otherwise, $(s_1,s_2)$ is an edge that either connects two clusters in $P_i$ or connects the cluster containing $x'$ with the cluster containing $x$;
in either case, $\pi$ intersects nontrivially with both $C_1$ and $C_2$, so we can choose $a \in C_1 \cap \pi$ and $b \in C_2 \cap \pi$, and we have $d_{G}(a,b) \le \norm{\pi} \le O(2^i) = O(2^k)$ as desired.
\end{proof}

\subsection{Step 3: Construction of the Steiner graph}\label{sec:steiner-graph-construction}

We now construct the graph $G_S$, whose vertex set consists of the original vertices in $V(G)$ and Steiner vertices associated with the clusters in $\mathcal{C}$. 
We define the \EMPH{Steiner graph $G_S$} as follows:
\begin{enumerate}
    \item Initialize $G_S$ as the 
    cluster graph $G_S / \cC$, with vertex set $\set{s_i : C_i \in \cC}$.
    (We only create Steiner vertices for the non-terminal clusters, and each terminal cluster $C_i = \set{t_i}$ is represented by the terminal $t_i$ itself.)
    \item For every \emph{non-terminal} cluster $C_i \in \cC$, choose an arbitrary vertex \EMPH{$r_i$} on the outermost layer of $G[C_i]$. Add the subgraph $G[C_i]$ to $G_S$, connected by a 0-weight edge $(s_i, r_i)$.
    \item 
    For every \emph{terminal} cluster $C_i = \set{t_i}$,
    for the sake of notation, we define both $s_i$ and $r_i$ to be $t_i$.
    \item For every pair of clusters $C_i, C_j$ in $\cC$ with an edge $(s_i, s_j)$ in $G_S$, 
    set the weight of the edge to be 
    \[
    w(s_i, s_j) \coloneqq 
    \max_{\substack{x\in C_i\\y\in C_j}}d_G(x,y) + \max_{\substack{v\in C_i \cup C_j}} d_G(v, T)
    \] 
\end{enumerate}
This completes the description of $G_S$. It remains to prove the four properties of \Cref{lem:g-with-steiner}.

\begin{observation}[Property 1 of \Cref{lem:g-with-steiner}]
    Steiner graph $G_S$ is a plane graph with outer face $T$. It comprises the face $T$, some Steiner vertices (connected to edges by $T$ and to each other), and some $(k/2)$-outerplanar subgraphs each attached to a Steiner vertex by a single edge.
\end{observation}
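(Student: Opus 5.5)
The plan is to read the structure of $G_S$ directly off the construction in Section~\ref{sec:steiner-graph-construction} and to obtain a plane drawing from the given drawing of $G$.

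\emph{Drawing.} Start from the plane drawing of $G$. The cluster graph $G/\cC$ arises from $G$ by contracting each cluster. Non-terminal clusters come from the partition of \Cref{lem:ccclpp} applied to $G[R_i]$, whose clusters have bounded strong diameter and are therefore connected. Contracting connected subgraphs preserves planarity. The terminal clusters are singletons, so the terminals, and hence the face $T$, are untouched. After deleting parallel edges and loops, $G/\cC$ is drawn in the plane, each Steiner vertex $s_i$ sits where $C_i$ was, and the outer face is still bounded by $T$. This holds because every non-terminal cluster lies strictly inside $T$, being contained in $V(G)\setminus V(T)$ by Property~2 of \Cref{lem:clustered-rings}.

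\emph{Re-inserting the clusters.} For each non-terminal cluster $C_i$, take a small disk around $s_i$ that meets no other part of the drawing. Draw $G[C_i]$ inside that disk using its induced drawing from $G$, scaled down, so that its outermost layer faces $s_i$. Then attach $r_i$ to $s_i$ with the single edge $(s_i,r_i)$. Because $r_i$ was chosen on the outermost layer of $G[C_i]$, this edge can be drawn without crossings. The result is a plane graph with outer face $T$.

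\emph{Checking the claimed structure.} The edges of $G_S$ fall into three groups. The first group is edges of $T$ and edges of $G$ between terminals, which appear as terminal-to-terminal edges of $G/\cC$. The second group is edges $(s_i,s_j)$ and $(s_i,t)$ inherited from $G/\cC$, so Steiner vertices are joined only to $T$ and to each other. The third group is the edges inside each $G[C_i]$, together with the single edge $(s_i,r_i)$.

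It remains to show that each $G[C_i]$ is $(k/2)$-outerplanar. It is an induced subgraph of the $(k/2)$-outerplanar graph $G$. Taking an induced subgraph with the inherited drawing cannot increase the number of outerplanar layers, since each vertex's layer index can only decrease.

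The main subtlety is the planarity of the contraction and the claim that $T$ stays the outer face. Both rest on the non-terminal clusters being connected and disjoint from $T$, which \Cref{lem:ccclpp} (strong diameter) and Property~2 of \Cref{lem:clustered-rings} supply.
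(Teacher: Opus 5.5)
Your proposal is correct and follows the route the paper has in mind: the paper gives no written proof of this observation and treats it as immediate from the construction of $G_S$, and your argument spells that out. Contracting the connected clusters, which are disjoint from $T$, preserves planarity and the outer face $T$; each $G[C_i]$ is re-attached at $s_i$ through $r_i$ on its outer face; and $k$-outerplanarity passes to induced subgraphs. One wording fix: the small disk for $G[C_i]$ should sit inside a face incident to $s_i$, not ``around'' $s_i$, and that face is bounded because $s_i \notin T$, which is exactly why $T$ stays the outer face of $G_S$.
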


\begin{lemma}[Property 2 of \Cref{lem:g-with-steiner}]
For every pair of vertices $(u,v)$ in $G$, we have $d_G(u,v) \le d_{G_S}(u,v)$.
\end{lemma}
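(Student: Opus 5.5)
The plan is to take an arbitrary $u$--$v$ path $Q$ in $G_S$ and map it to a $u$--$v$ walk in $G$ that is no longer than $Q$. Since $Q$ is arbitrary, this gives $d_G(u,v)\le d_{G_S}(u,v)$. The edges of $G_S$ come in three kinds, and I will treat each separately:
\begin{itemize}
\item \emph{intra-cluster edges} of some $G[C_i]$, which are original edges of $G$ with their original weights;
\item the \emph{zero-weight attachment edges} $(s_i,r_i)$;
\item the \emph{cluster-graph edges} $(s_i,s_j)$, where $s_i=t_i$ for terminal clusters.
\end{itemize}

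First, decompose $Q$ at its visits to Steiner vertices and terminals. Each maximal subpath of intra-cluster edges lies inside a single $G[C_i]$, because clusters are joined to the rest of $G_S$ only through their attachment edge. Such a subpath is already a walk in $G$ of the same length. Its endpoints are $u$ or $v$ (when $u$ or $v$ lies in $C_i$), or the root $r_i$ (where $Q$ enters or leaves through $s_i$).

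The remaining part of $Q$ is a sequence of cluster-graph edges and attachment edges. I will replace it greedily. Whenever $Q$ is at Steiner vertex $s_i$, I let it "stand for" some vertex of $C_i$, namely the vertex through which $Q$ entered or will leave, which is either $r_i$ or the previous exit point. The key inequality is then the following. For an edge $(s_i,s_j)$ and any $x\in C_i$, $y\in C_j$,
\[
d_G(x,y)\le \max_{x'\in C_i,\,y'\in C_j} d_G(x',y') \le w(s_i,s_j).
\]
Hence traversing $(s_i,s_j)$ in $Q$ can be simulated in $G$ from any representative $x$ of $C_i$ to any representative $y$ of $C_j$ at cost at most $w(s_i,s_j)$. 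This uses only the first term of the weight; the $d_G(\cdot,T)$ term is slack. Attachment edges have weight $0$ and require no movement, because when $Q$ goes $r_i\to s_i\to s_j$ I choose the representative of $C_i$ to be $r_i$ itself.

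The only subtlety is a "pass-through" visit, where $Q$ goes $s_h\to s_i\to s_j$ without entering $C_i$. Here I must move within $C_i$ between the arrival representative and the departure representative. I avoid this by choosing the same representative $x\in C_i$ for both the incoming and the outgoing edge. The bound above holds for every pair of representatives, so a single fixed $x$ works for both edges. Concretely, I set $x=r_i$, or the actual entry/exit vertex when $Q$ dips into $C_i$, and a single vertex per visit suffices in every case. Concatenating the pieces gives a walk in $G$ from $u$ to $v$ of length at most $\len_{G_S}(Q)$.

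Taking $Q$ to be a shortest path completes the proof. I expect no real obstacle beyond carefully handling the case where $u$ or $v$ is itself a terminal, where $s_i=t_i$ is an original vertex, and making sure each cluster-graph edge's weight dominates the distance between the chosen representatives. That is guaranteed by the max in the definition of $w$.
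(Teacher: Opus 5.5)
Your proposal is correct and follows essentially the same route as the paper: both replace each Steiner vertex on the $G_S$-path by a representative vertex of its cluster, bound each hop by the term $\max_{x\in C_i,\,y\in C_j} d_G(x,y)$ in $w(s_i,s_j)$, and finish with the triangle inequality. The differences are cosmetic. The paper uses simplicity of a shortest path to rule out detours into intermediate clusters, splitting into same-cluster and different-cluster cases and absorbing $u,v$ directly into the max term. You instead treat arbitrary walks uniformly, fixing $r_i$ (or $t_i$ for terminal clusters) as the single representative per visit.
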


\begin{proof}
Let $P$ be the shortest path between $u$ and $v$ in $G_S$. We show $\norm{P} \ge d_G(u,v)$. There are two cases.

\begin{itemize}
\item
\textit{If $u$ and $v$ are in the same cluster $C \in \cC$}: 
Then $P$ lies in $G[C]$. (Indeed, if $P$ wandered outside of $G[C]$, then $P$ would have to leave and reenter $C_i$ at the vertex $s_i$, forming a cycle.) Thus $\norm{P} = d_{G[C]}(u,v) \ge d_G(u,v)$.

\item
\textit{Now suppose that $u$ and $v$ are in different clusters, say $C_{u}$ and $C_{v}$}:  
Let $s_u$ and $s_v$ be the Steiner vertices associated with $C_u$ and $C_v$, respectively. In this case, the prefix of $P$ is a path from $u$ to $s_{u}$, the suffix of $P$ is a path from $s_{v}$ to $v$, and every other vertex on $P$ is a Steiner vertex. (Otherwise, $P$ would contain a cycle.)
Let $\hat P = [s_u = s'_0, s'_1, \ldots, s'_\ell = s_v]$ be the subpath of $P$ that runs between $s_u$ and $s_v$.
For every $i \in [\ell]$, let \EMPH{$C'_i$} be the cluster associated with $s'_i$, and let $\textcolor{BrickRed}{\emph{$z_i$}}\in C'_i$ be an arbitrary vertex in $C'_i$ (choosing $z_0 = u$ and $x_\ell = v$). Then we have
\begin{align*}
    \norm{P} \ge \norm{\hat P}
    &= \sum_{i = 0}^{\ell - 1} w(s_i, s_{i+1})\\
    &\ge \sum_{i = 0}^{\ell - 1} \max_{x \in C'_i, y \in C'_{i+1}} d_G(x,y)\\
    &\ge \sum_{i = 0}^{\ell - 1} d_G(z_i,z_{i+1})\\
    &\ge d_G(z_0, z_\ell) = d_G(u,v).
\end{align*}
\end{itemize}
\aftermath
\end{proof}

\begin{lemma}[Property 3 of \Cref{lem:g-with-steiner}]
    For every pair of vertices $(u,v)$ in $G$, we have $\mathbb{E}[d_{G_S}(u,v)] \le O(1) \cdot d_G(u,v)$.
\end{lemma}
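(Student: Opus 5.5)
The plan is to reduce to a single edge and then pay for a cut edge with one Steiner edge. Fix $u,v$ and a shortest path $\pi$ in $G$ between them. Since $d_{G_S}$ satisfies the triangle inequality and expectation is linear, it suffices to prove $\mathbb{E}[d_{G_S}(x,y)] \le O(1)\cdot w_G(x,y)$ for every edge $(x,y)$ of $G$; summing over the edges of $\pi$ gives the statement. Recall that we assume $w_G(x,y) = d_G(x,y)$.

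Fix an edge $(x,y)$ and split on whether it is cut by the clustering $\cC$.

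\textbf{Uncut edge.} Suppose $x,y$ lie in the same cluster $C$. If $C$ is a terminal cluster, then $x=y$. Otherwise $G[C]$ is a subgraph of $G_S$ containing the edge $(x,y)$, so $d_{G_S}(x,y)\le w_G(x,y)$.

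\textbf{Cut edge.} Suppose $x\in C_i$ and $y\in C_j$ with $i\ne j$. The clusters are adjacent in the cluster graph, so $G_S$ contains the Steiner edge $(s_i,s_j)$. Travel from $x$ to $r_i$ inside $G[C_i]$, take the $0$-weight edge to $s_i$, cross to $s_j$, and descend through $r_j$ to $y$. (For a terminal cluster, $s=r=t$ and the first leg is trivial.) This gives
\[
d_{G_S}(x,y)\le d_{G[C_i]}(x,r_i)+w(s_i,s_j)+d_{G[C_j]}(r_j,y).
\]
By Property 2 of \Cref{lem:clustered-rings}, each strong diameter is at most $2^{k}$ when the cluster lies in ring $R_k$. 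Let $k$ be the largest ring index among $C_i,C_j$. Then $\max_{a\in C_i,b\in C_j} d_G(a,b)\le \diam(C_i)+w_G(x,y)+\diam(C_j)=O(2^k)+w_G(x,y)$, and $\max_{z\in C_i\cup C_j} d_G(z,T)< 2^{k+1-\alpha}=O(2^k)$. Altogether, on the cut event,
\[
d_{G_S}(x,y)\le O(2^k)+w_G(x,y).
\]

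\textbf{Relating the scale to $x,y$.} I need $2^k=O(\min\{d_G(x,T),d_G(y,T)\})$ plus lower-order terms. Say $x\in R_k$, so $2^k\le 2\,d_G(x,T)$. If $y$ is in a lower ring or is a terminal, then $d_G(x,T)\le d_G(y,T)+w_G(x,y)$. Hence
\[
2^k=O\bigl(\min\{d_G(x,T),d_G(y,T)\}+w_G(x,y)\bigr).
\]
Terminal endpoints need a separate check. If one endpoint is a terminal, say $y\in T$, then $d_G(x,T)\le w_G(x,y)$, so $2^k=O(w_G(x,y))$ and the cut edge costs $O(w_G(x,y))$ deterministically.

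\textbf{Taking expectations.} Write $m=\min\{d_G(x,T),d_G(y,T)\}$ and $w=w_G(x,y)$. Then
\[
\mathbb{E}[d_{G_S}(x,y)]\le w+\Pr[\text{cut}]\cdot O(m+w).
\]
Property 3 of \Cref{lem:clustered-rings} gives $\Pr[\text{cut}]\le O(w/m)$. Therefore
\[
\Pr[\text{cut}]\cdot O(m+w)\le O(w)+\Pr[\text{cut}]\cdot O(w)=O(w),
\]
using $\Pr\le 1$ for the second term.

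The main obstacle is exactly this scale bookkeeping. The bound $2^k$ must be charged against the minimum of the two terminal distances, not the maximum, because the cut probability is normalized by the minimum. The step $d_G(x,T)\le d_G(y,T)+w$ handles this. The terminal-endpoint case also has to be checked separately, since there $m=0$ and the probability bound from Property 3 is vacuous.
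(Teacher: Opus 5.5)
Your proof is correct and follows the paper's argument essentially step for step. You reduce to a single edge, bound the cut-case detour through the Steiner edge $(s_i,s_j)$ by the cluster diameters plus the edge weight, and pay for it with the cut probability from Property 3 of \Cref{lem:clustered-rings}. The only difference is cosmetic: you fold the paper's Cases 2a and 2b into the single bound $2^k = O(\min\{d_G(x,T),d_G(y,T)\} + w_G(x,y))$, which rests on the same triangle inequality the paper uses to get $\gamma < 2\gamma'$, and you treat terminal endpoints explicitly.
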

\begin{proof}
    It suffices to prove the lemma for every edge $(u,v)$ in $G$.
    Let $\gamma = \dist_G(u,T)$ and $\gamma' = \dist_G(v,T)$. Assume without loss of generality that $\gamma \ge \gamma'$. 
In this proof, we consider three cases. In the first two cases (Case 1 and Case 2a), we prove that $\dist_{G_s}(u,v)= O(d_G(u,v))$. 
In the last case (Case 2b), we prove that $\dist_{G_s}(u,v) = O(\gamma)$, but this case occurs only with probability at most $O(d_G(u,v) / \gamma)$. 
Overall, we conclude that the expected distance $\mathbb{E}[\dist_{G_s}(u,v)]$ is $O(d_G(u,v))$.

\medskip \noindent \textbf{Case 1: \boldmath{$(u,v)$} is not cut.} Suppose that both $u$ and $v$ belong to the same cluster $C \in \cC$. The edge $(u,v)$ is in the induced subgraph $G[C]$ and thus is also in the graph $G_S$, and we conclude that the path $P(u,v) = [u,v]$ is in $G_S$ and satisfies $\norm{P(u,v)} = d_G(u,v)$.

\medskip \noindent \textbf{Case 2: \boldmath{$(u,v)$} is cut.}
Suppose otherwise: the edge $(u,v)$ is cut.
We claim that 
\begin{equation}
\label{eq:detour-len}
    \norm{d_{G_S}(u,v)} \le d_G(u,v) + O(\gamma).
\end{equation}
Let $C_u$ and $C_v$ be the clusters containing $u$ and $v$, respectively, with Steiner vertices $s_u$ and $s_v$. By \Cref{lem:clustered-rings}(2), we have $\diam(G[C_u]) \le O(\gamma)$ $\diam(G[C_v]) \le O(\gamma') \le O(\gamma)$, so the construction of $G_S$ implies that $d_{G_S}(u,s_u) \le O(\gamma)$ and $d_{G_S}(s_v,v) \le O(\gamma)$.
By construction, the edge $(s_u, s_v)$ in $G_S$ has weight equal to $\max_{x \in C_u, y\in C_v} d_G(x,y) + O(\gamma)$.
But triangle inequality, together with the fact that $C_u$ and $C_v$ have diameter at most $O(\gamma)$, implies that for any $x \in C_u$ and $y \in C_v$, we have $d_G(x,y) \le d_G(x, u) + d_G(u,v) + d_G(v, y) \le d_G(u,v) + O(\gamma)$.  Thus $d_{G_S}(s_u, s_v) = d_G(u,v) + O(\gamma)$.
This completes the proof of \Cref{eq:detour-len}.
From here, there are two cases.

\medskip \noindent \textbf{Case 2a: \boldmath{$d_G(u,v) \ge \gamma/2$}.} In this case, \Cref{eq:detour-len}  implies $P(u,v)$ has length at most $O(d_G(u,v))$, and we are done. 

\medskip \noindent \textbf{Case 2b: \boldmath{$d_G(u,v) < \gamma/2$}.} In this case, \Cref{eq:detour-len} implies that $\norm{P} = O(\gamma)$. We now show that Case 2b occurs with probability at most $O(d_G(u,v) / \gamma)$. Recall that $\gamma = d_G(u,T)$, $\gamma' = d_G(v,T)$, and $\gamma > \gamma'$. We claim that
\begin{equation}
\label{eq:gamma-similar}
\gamma < 2 \gamma'.
\end{equation}
Indeed, triangle inequality 
implies that $\gamma = \dist_G(u,T) \le \dist_G(u,v) + \dist_G(v,T)$; after substituting $\dist_G(u,v) < \gamma/2$ and $\dist_G(v,T) = \gamma'$, we conclude $\gamma / 2 < \gamma'$ as desired. 
Now, by \Cref{lem:clustered-rings}(3), we have that $(u,v)$ is cut with probability at most $O(d_G(u,v) / \min\set{\gamma,\gamma'})$. By \Cref{eq:gamma-similar}, this probability is at most $O(d_G(u,v) / \gamma)$, as desired.
\end{proof}

It remains to prove \Cref{lem:g-with-steiner}(4). This property follows as a simple corollary of another lemma: there is a \emph{deterministic} bound on the distance between a vertex $v$ and its closest terminal in $G_S$.

\begin{lemma}
\label{lem:deterministic-gamma}
    Let $C \in \cC$ be a cluster with associated Steiner vertex $s \in V(G_S)$. Then
    \[d_{G_S}(s,T) \le O(1) \cdot \max_{v \in C} d_G(v,T)\]
\end{lemma}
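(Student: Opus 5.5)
The plan is to use the $v$-to-$T$ preserving path of \Cref{lem:clustered-rings}(4) and bound its length in $G_S$ by a geometric series. Fix a non-terminal cluster $C$ with Steiner vertex $s$, lying in ring $R_i$. (If $C$ is a terminal cluster, then $s\in T$ and there is nothing to prove.) By \Cref{lem:clustered-rings}(2), every $v\in C$ satisfies $d_G(v,T)\ge 2^{i-\alpha}\ge 2^{i-1}$. Hence it suffices to show $d_{G_S}(s,T)=O(2^i)$.

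Pick any $v\in C$ and let $P=[s=s_0,s_1,\dots,s_\ell]$ be a $v$-to-$T$ preserving path in $G/\cC$, where $s_\ell$ is a terminal. By construction, every edge of the cluster graph is an edge of $G_S$, so $P$ is also a path in $G_S$ from $s$ to $T$. Consequently,
\[
d_{G_S}(s,T)\le \sum_{m=0}^{\ell-1} w(s_m,s_{m+1}).
\]
The next step is to bound each term. Consider an edge $(s_m,s_{m+1})$ with clusters $C_m,C_{m+1}$, and let $k$ be the larger of their ring indices. A terminal cluster has diameter $0$ and distance $0$ to $T$, so it contributes nothing extra. By \Cref{lem:clustered-rings}(4) there exist $a\in C_m$ and $b\in C_{m+1}$ with $d_G(a,b)=O(2^k)$. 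Both clusters have diameter at most $2^k$ by property (2). The triangle inequality then gives
\[
\max_{x\in C_m,\,y\in C_{m+1}} d_G(x,y)\le \operatorname{diam}(C_m)+d_G(a,b)+\operatorname{diam}(C_{m+1})=O(2^k).
\]
Moreover, every vertex of $C_m\cup C_{m+1}$ lies in a ring of index at most $k$, so its distance to $T$ is less than $2^{k+1-\alpha}=O(2^k)$. Together these give $w(s_m,s_{m+1})=O(2^k)$.

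Finally, we charge each edge of $P$ to the cluster on it with the larger ring index. By \Cref{lem:clustered-rings}(4), $P$ uses only clusters in rings $R_j$ with $j\le i$, plus one terminal cluster. It also contains only $O(1)$ clusters from each ring $R_j$. Each cluster is charged by at most two edges, since $P$ is a path. Therefore
\[
d_{G_S}(s,T)\le \sum_{j\le i} O(1)\cdot O(2^j)=O(2^i)=O(1)\cdot \max_{v\in C} d_G(v,T).
\]
The argument is fairly mechanical once \Cref{lem:clustered-rings}(4) is available. The only point needing care is the per-edge weight bound: the weight is defined via a maximum over all pairs in the two clusters, so the existence of a single close pair $(a,b)$ must be combined with the diameter bound. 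The charging must also be done by the larger ring index, so that the $O(1)$-clusters-per-ring property yields the geometric sum rather than a sum over path length.
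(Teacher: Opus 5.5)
Your proof is correct and follows the paper's argument: you view the $v$-to-$T$ preserving path as a path in $G_S$, bound each scale-$j$ edge weight by $O(2^j)$ using the close pair $(a,b)$ together with the cluster diameter bounds, and sum a geometric series over $O(1)$ edges per scale. Your charging of each edge to its higher-ring endpoint, with each cluster charged at most twice, spells out the paper's claim that each scale contains only $O(1)$ edges of $P$.
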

\begin{proof}
    Let $\textcolor{BrickRed}{\emph{$\gamma$}} \coloneqq d_G(v,T)$. Let $C \in \cC$ be the cluster containing $v$, let $r$ be the root vertex of $C$ and let $s$ be the Steiner vertex of $C$.
    By \Cref{lem:clustered-rings}(2), the diameter of $G[C]$ is $O(\gamma)$, and so $d_{G_S}(v, r) = d_{G[C]}(v, r) \le O(\gamma)$. There is a 0-weight edge between $r$ and $s$ in $G_S$. To prove the lemma, it suffices to shows that $d_{G_S}(s, T) \le O(\gamma)$.
    By \Cref{lem:clustered-rings}, there is a $v$-to-$T$ preserving path \EMPH{$P$} in the cluster graph $G / \cC$ between $s$ and $T$. We view path $P$ as a path in $G_S$. The rest of this proof shows that $\norm{P} = O(\gamma)$.

    For any edge $e = (s_1, s_2)$ in $P$, letting $C_1$ and $C_2$ denote the clusters associated with $s_1$ and $s_2$, we define the \EMPH{scale} of $e$ to be the maximum index $i$ such that $C_1$ or $C_2$ is in the ring $R_i$.%
    \footnote{In this definition, we abuse notation and treat terminal clusters as belonging to the ring with index $-1$. Observe that every edge $e$ in $P$ has at least one non-terminal endpoint, and so the scale of $e$ is always $\ge 0$.},
    Let \EMPH{$\mathcal{E}_i$} denote the set of scale-$i$ edges in $P$. Let \EMPH{$\kappa$} denote the index such that $v$ is in ring $R_\kappa$. Observe that $2^\kappa = O(\gamma)$ by \Cref{lem:clustered-rings}(2). Moreover, every edge in $P$ has scale $j$ with $j \le \kappa$, by definition of $v$-to-$T$ preserving path.
    The sets $\mathcal{E}_i$ for $i \in [\kappa]$ form a partition of the edges of $P$:
    \begin{equation*}
        \norm{P} = \sum_{\text{edge } e \in P} w(e)
        = \sum_{i \in [\kappa]} \sum_{e \in \mathcal{E}_i} w(e)
    \end{equation*}
    By definition of $v$-to-$T$ preserving path, there are $O(1)$ edges in $\mathcal{E}_i$, for all $i \in [\kappa]$.
    We show below that every edge in $\mathcal{E}_i$ has weight at most $O(2^i)$. This is sufficient to prove the lemma:
    \[
    \norm{P} = \sum_{i \in [\kappa]} O(2^i) = O(2^\kappa) = O(\gamma).
    \]

    It remains to show that an arbitrary edge $(s_1, s_2)$ in $\mathcal{E}_i$ has weight $O(2^i)$, for any $i \in [\kappa]$. Indeed, let $C_1$ and $C_2$ denote the clusters associated with $s_1$ and $s_2$. The weight of edge $(s_1, s_2)$ in $G_S$ is 
    \[w(s_1, s_2) = \max_{x \in C_1, y \in C_2} d_G(x,y) + \max_{v \in C_1 \cup C_2} d_G(v, T) \le \max_{x \in C_1, y \in C_2} d_G(x,y) + O(2^i).\]
    By definition of $v$-to-$T$ preserving path, there exist $a \in C_1$ and $b \in C_2$ with $d_G(a,b) \le O(2^i)$. Moreover, $\diam(G[C_1]) \le O(2^i)$ and $\diam(G[C_2]) \le O(2^i)$, by \Cref{lem:clustered-rings}(2). Thus, triangle inequality implies that for any $x \in C_1$ and $y \in C_2$, we have $d_G(x,y) \le d_G(x, a) + d_G(a, b) + d_G(b, y) \le O(2^i)$. Thus $w(s_1,s_2) \le O(2^i)$ as desired.
\end{proof}

\begin{lemma}[Property 4 of \Cref{lem:g-with-steiner}]
        Let $(a,b)$ be an edge in $G_S$ such that $a$ and $b$ are either Steiner vertices or vertices in $T$. Let $t_a \in T$ (resp.\ $t_b \in T$) be vertex in $T$ that is closest to $a$ (resp.\ $b$) in $G_S$. Then 
        \[w(a,b) \ge \frac{d_{G_S}(a,t_a) + d_{G_S}(t_a,t_b) + d_{G_S}(t_b, b)}{O(1)}.\]
\end{lemma}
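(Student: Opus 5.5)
The plan is to reduce the right-hand side to $O(d_{G_S}(a,T) + d_{G_S}(b,T))$ and then compare both terms to the weight of $(a,b)$ using \Cref{lem:deterministic-gamma} and the definition of $w$.

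\textbf{Step 1: reduce to distances to $T$.} Since $t_a$ and $t_b$ are closest terminals, $d_{G_S}(a,t_a) = d_{G_S}(a,T)$ and $d_{G_S}(t_b,b) = d_{G_S}(b,T)$. By the triangle inequality along the path $t_a \to a \to b \to t_b$,
\[
d_{G_S}(t_a,t_b) \le d_{G_S}(a,T) + w(a,b) + d_{G_S}(b,T).
\]
So the numerator is at most $w(a,b) + 2\bigl(d_{G_S}(a,T) + d_{G_S}(b,T)\bigr)$. It therefore suffices to show $d_{G_S}(a,T) \le O(1)\cdot w(a,b)$, and symmetrically for $b$.

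\textbf{Step 2: case analysis on the endpoints.} If $a$ is a terminal, then $d_{G_S}(a,T)=0$ and there is nothing to prove. Otherwise $a = s_i$ is the Steiner vertex of a non-terminal cluster $C_i$. Apply \Cref{lem:deterministic-gamma} to get
\[
d_{G_S}(s_i,T) \le O(1)\cdot \max_{v\in C_i} d_G(v,T).
\]
By the weight definition in Step 3 of the construction (\Cref{sec:steiner-graph-construction}),
\[
w(s_i,s_j) \ge \max_{v \in C_i\cup C_j} d_G(v,T) \ge \max_{v\in C_i} d_G(v,T).
\]
This also covers the case where the other endpoint is a terminal $t_j$, because the weight formula is stated with $C_j=\{t_j\}$. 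Combining the two inequalities gives $d_{G_S}(a,T) \le O(1)\cdot w(a,b)$. The same argument applies to $b$.

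\textbf{Step 3: edges between terminals.} There may be edges of $G_S$ joining two terminals, namely edges of the face $T$ or chords inherited from the cluster graph. For such an edge, $t_a=a$ and $t_b=b$, so the numerator is exactly $d_{G_S}(a,b) \le w(a,b)$ and the bound is trivial.

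\textbf{Main obstacle.} The substantive ingredient is the deterministic bound of \Cref{lem:deterministic-gamma}, which we take as given. The remaining care is bookkeeping. First, the weight formula must actually apply to every edge incident to a Steiner vertex, including edges to terminals. Second, I need to check that \Cref{lem:deterministic-gamma}'s proof bounds the distance from $s$ itself (not just from some $v\in C$) by $O(\max_{v\in C} d_G(v,T))$. In its proof, the $v$-to-$T$ preserving path begins at $s$, so this holds.
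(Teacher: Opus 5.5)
Your proposal is correct and follows essentially the same route as the paper. You bound $d_{G_S}(a,t_a)$ and $d_{G_S}(b,t_b)$ by \Cref{lem:deterministic-gamma} together with the additive $\max_{v\in C_a\cup C_b} d_G(v,T)$ term in $w(a,b)$, and you control $d_{G_S}(t_a,t_b)$ by the triangle inequality through the edge $(a,b)$; your explicit handling of terminal endpoints, and your check that \Cref{lem:deterministic-gamma} bounds the distance from the Steiner vertex $s$ itself, are details the paper leaves implicit.
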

\begin{proof}
To simplify notation, define $\gamma \coloneqq \max_{v \in C_a \cup C_b} d_G(v,T)$. 
By construction, we have $w(a,b) = \max_{x \in C_a, y \in C_b} d_G(x, y) + \gamma$.
By \Cref{lem:deterministic-gamma}, we have $d_{G_S}(a, t_a) \le O(\gamma) \le O(w(a,b))$ and $d_{G_S}(t_b, b) \le O(\gamma) \le O(w(a,b))$. 
Moreover, triangle inequality implies that $d_{G_S}(t_a, t_b) \le d_{G_S}(t_a, a) + w(a,b) + d_{G_S}(b, t_b) \le O(w(a,b))$. It follows that $d_{G_S}(a, t_a) + d_{G_S}(t_a,t_b) + d_{G_S}(t_b, b) \le O(w(a,b))$ as desired.
\end{proof}

\section{Steiner Point Removal}

We are now in the final phase, where we construct graph $\hat{G}$ in the following manner: Consider the graph $G_S$, i.e. the graph furnished by the output of the procedure in~\Cref{sec:ldd} with the guarantees of~\Cref{lem:g-with-steiner}. Temporarily remove all the clusters from $G_S$ to get $G_S\setminus \cC$, and then apply the Steiner point removal technique~\cite{chang2024shortcut}, \textsc{PlanarSPR}, to $G_S\setminus\cC$. Finally attach back the clusters in $\cC$ to the terminal $t'$ to which the corresponding Steiner point is associated via root $r\in V(C)$, with direct edge of weight $w_{\hat{G}}(r, t') = d_{G_S}(r, t')$. 
We restate~\Cref{lem:SPRB} here.
\SPR*

In order to prove~\Cref{lem:SPRB}, we first state the guarantees of the Steiner point removal technique~\cite{chang2024shortcut}, which we use a black box. 

\begin{lemma}[Rephrasing of Theorem 1.4 of~\cite{chang2024shortcut}]
\label{lem:SPRTh1.2}
    Let $G=(V,E,w)$ be an arbitrary edge-weighted planar graph and let $T\subseteq V$ be a set of terminals. Then there is solution of the SPR problem with distortion $O(1)$.
    In other words, there is a graph minor $M$ of $G$ such that $V(M) = T$ and for some constant $\alpha\geq 1$, we have that for every pair of terminals $t_1, t_2\in T$,
    \[d_G(t_1, t_2)\leq d_M(t_1, t_2)\leq \alpha\cdot d_G(t_1, t_2)\]
\end{lemma}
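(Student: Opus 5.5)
The plan is to derive this statement directly from Theorem 1.4 of~\cite{chang2024shortcut}. The work lies in checking that the definitions of the Steiner point removal (SPR) problem used there match the formulation stated here. I will not re-prove the shortcut-partition machinery.

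First, I recall what an SPR solution is. We are given $(G,T)$ with $G$ planar and edge-weighted. A solution partitions $V$ into vertex-disjoint sets $\{V_t\}_{t\in T}$, one for each terminal. Each $V_t$ contains $t$ and induces a connected subgraph of $G$. Contracting every $V_t$ into the single vertex $t$, and deleting loops and parallel edges, gives a minor $M$ of $G$ with $V(M)=T$. Each surviving edge $(t_1,t_2)$ of $M$ receives weight $w_M(t_1,t_2)\coloneqq d_G(t_1,t_2)$. The theorem of~\cite{chang2024shortcut} asserts that for planar $G$ such a partition exists whose induced minor satisfies $d_M(t_1,t_2)\le \alpha\cdot d_G(t_1,t_2)$ for an absolute constant $\alpha$. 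This is exactly the upper bound in the statement.

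Second, I verify the lower bound $d_G(t_1,t_2)\le d_M(t_1,t_2)$ directly from the weight assignment, in case the cited theorem only states the expansion bound. Take a shortest path $t_1=\tau_0,\tau_1,\dots,\tau_m=t_2$ in $M$. Then
\[d_M(t_1,t_2)=\sum_{i} w_M(\tau_{i-1},\tau_i)=\sum_i d_G(\tau_{i-1},\tau_i)\ge d_G(t_1,t_2),\]
by the triangle inequality in $G$. So $M$ never contracts terminal distances.

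The only real obstacle I anticipate is bookkeeping about the form of the guarantee. I need to confirm that the cited theorem gives a single deterministic minor with worst-case $O(1)$ distortion, rather than a distribution with expected distortion. I also need to confirm it applies to arbitrary nonnegative weights and arbitrary terminal sets, not only to terminals on one face. If the cited version instead had the terminals on the outer face, the application in this paper would still be covered: in $G_S\setminus\cC$ the terminals $T$ form the outer face. If the cited result were only stated in expectation, then \Cref{lem:SPRB} would still follow, since it only requires an expected upper bound. The non-contraction half is unaffected either way, because it comes from the weight assignment above.
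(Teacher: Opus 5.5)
Your proposal is correct and matches the paper, which gives no separate proof and simply invokes the SPR theorem of~\cite{chang2024shortcut} as a black box. That theorem gives a single deterministic minor for arbitrary terminal sets in planar (indeed $K_r$-minor-free) graphs, and your extra check of non-contraction via $w_M(t_1,t_2)=d_G(t_1,t_2)$ and the triangle inequality is the standard reason the lower bound is built into the SPR formulation.
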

We also require the following guarantee from Claim 2.7 of~\cite{chang2024shortcut}:

\begin{lemma}[Claim 2.7 of~\cite{chang2024shortcut}] 
\label{lem:SPRCl2.7}
    Let $v\in V(G)$ be such that $v$ gets associated with the terminal $t\in T$ after the Steiner point removal process. We have
    \[d_G(v, t)\leq O(1)\cdot d_G(v, T)\]
\end{lemma}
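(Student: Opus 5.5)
Since this is Claim 2.7 of~\cite{chang2024shortcut}, the natural route is to open up the construction behind \Cref{lem:SPRTh1.2} and check that the assignment of a non-terminal vertex to a terminal is made through a chain of clusters whose scales decrease geometrically toward $T$. The argument would then be the same geometric-series argument as in \Cref{lem:deterministic-gamma}. I am reconstructing the construction of~\cite{chang2024shortcut} from memory rather than from the paper, so the first task is to confirm that it has the structure below.

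\textbf{The construction I expect.} Vertices are bucketed by their distance to $T$. Scale $i$ consists of the vertices with $d_G(v,T)\in[2^{i},2^{i+1})$, possibly after a random shift as in our ring construction. Each bucket is partitioned with a shortcut partition, as in \Cref{lem:ccclpp}, into clusters of strong diameter $\epsilon 2^i$. Clusters are then processed in increasing order of scale. Each cluster is merged into an adjacent, already-assigned cluster, or directly into a terminal, and inherits that cluster's terminal $t$. The resulting minor $M$ is obtained by contracting each final terminal part.

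\textbf{The claim to prove by induction on the scale $i$ of $v$.} Every vertex $v$ at scale $i$ is assigned to a terminal $t$ with $d_G(v,t)\le c\cdot 2^{i}$, for a fixed constant $c$.
\begin{itemize}
\item Take a shortest path $\pi$ from $v$ to $T$. Every vertex of $\pi$ has distance to $T$ at most $d_G(v,T)$, so $\pi$ only passes through scales $\le i$.
\item By the shortcut property, the prefix of $\pi$ lying in scale $i$ meets only $O(1)$ scale-$i$ clusters, each of diameter $O(2^i)$. So the cluster of $v$ is attached to a lower-scale cluster, or to a terminal, by a chain of $O(1)$ clusters of total $G$-length $O(2^i)$.
\item Write the chain as $v=x_i, x_{i-1},\dots,x_{-1}=t$, where $x_j$ is the representative at scale $j$. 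Each step satisfies $d_G(x_j,x_{j-1})=O(2^j)$.
\item Summing, $d_G(v,t)\le\sum_{j\le i}O(2^j)=O(2^i)=O(d_G(v,T))$.
\end{itemize}

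\textbf{Main obstacle.} The hard step is the second bullet: showing that the terminal $v$ inherits really is reached through a \emph{monotone} chain of this kind. The assignment rule in~\cite{chang2024shortcut} might instead attach a cluster to an arbitrary neighbouring cluster, for example one of equal or larger scale. In that case I would first try to show that attachment is always to a neighbour strictly closer to $T$, or of the same scale with a charging argument that prevents long horizontal chains. If neither holds, I would fall back on the distance-based cutting rule used in~\cite{chang2024shortcut}, under which a vertex $v$ is only ever captured by a terminal within distance $(1+O(\epsilon))\cdot O(d_G(v,T))$ of $v$. That directly gives the claimed bound $d_G(v,t)\le O(1)\cdot d_G(v,T)$.
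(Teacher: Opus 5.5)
The paper does not prove this statement. It imports Claim 2.7 of \cite{chang2024shortcut} verbatim as a black box, together with \Cref{lem:SPRTh1.2}, so a citation is all it requires. Judged as a self-contained argument, your sketch has a genuine gap, at exactly the step you flag. The shortcut property of \Cref{lem:ccclpp} gives only the \emph{existence} of an $O(1)$-hop path in the cluster graph, through clusters that meet $\pi$, from the cluster of $v$ down to a lower scale. (It also does not say that $\pi$ itself meets only $O(1)$ clusters.) Existence suffices in \Cref{lem:deterministic-gamma}, because $d_{G_S}(s,T)$ is a minimum over all paths, so one short path bounds it. It does not suffice here. The terminal $t$ associated with $v$ is determined by the merge rule of the SPR algorithm, not by $\pi$, and the existence of a short path does not constrain which terminal the algorithm hands to $v$.

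Concretely, under a rule like yours (merge into some adjacent, already-assigned cluster), the cluster of $v$ can be merged into a neighbour that is not on $\pi$. That merge may cross an edge much longer than $2^i$; with $w_G(x,y)=d_G(x,y)$ such edges exist between vertices close to two far-apart terminals. Or it may come at the end of a long chain of same-scale clusters, each inheriting its label from the previous one. Either way your third bullet fails. The bound $d_G(x_j,x_{j-1})=O(2^j)$ is justified only for consecutive clusters along $\pi$ (the third item of \Cref{lem:clustered-rings}), not for the chain along which the label actually propagated. Your fallback is circular: a rule under which ``$v$ is only ever captured by a terminal within distance $O(d_G(v,T))$'' is the claim itself.

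To close the gap you need three things:
\begin{enumerate}
\item the actual assignment rule of \cite{chang2024shortcut};
\item an argument that uses the shortcut property to bound how many merge rounds a scale-$i$ cluster waits (its hop distance to already-assigned vertices), rather than the route its label takes;
\item a rule-specific bound of $O(2^i)$ on the cost of each merge step, namely the cluster diameter plus the length of the merging edge.
\end{enumerate}
You can then induct on (scale, round), choosing $c$ so that $O(1)\cdot 2^i + c\,2^{i-1}\le c\,2^i$. Otherwise, simply cite the claim, as the paper does.
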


Let $\hat{G}$ be the graph formed after the Steiner point removal process. 
Note that $\hat{G}$ is an outerplanar layer on the terminal set $T$ with a bunch of $(k/2)$-outerplanar clusters hanging off terminals in $T$ via edge connections. Let $f: V(G_S) \to V(G)$ be the function which maps a Steiner vertex in $G_S$ to the terminal to which it is attached in $\hat G$ (and maps a non-Steiner vertex to itself), according to the black box procedure $\textsc{PlanarSPR}$. Now we are ready to prove the final bound.

\subsection{Distortion guarantee for SPR}

We first prove the following distortion bound for any edge between Steiner vertices or terminals in $G_S$.
\begin{claim}\label{lem:SPR-local}
    Let $a,b\in V(G_S)$ be Steiner vertices such that $(a,b)\in E(G_S)$. Moreover, let $f(a)\in T$ (resp.\ $f(b)\in T$) be the terminal vertex that the vertex $a$ (resp.\ $b$) associates with after the Steiner point removal. 
    Then 
    \[d_{G_S}(f(a), f(b)) \leq O(1)\cdot w_{G_S}(a, b).\]
\end{claim}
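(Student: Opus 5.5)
The plan is to bound $d_{G_S}(f(a),f(b))$ by a path $f(a)\to a\to b\to f(b)$ in $G_S$ and then show that each of the two outer legs is $O(w_{G_S}(a,b))$. By the triangle inequality in $G_S$,
\[
d_{G_S}(f(a),f(b)) \le d_{G_S}(f(a),a) + w_{G_S}(a,b) + d_{G_S}(b,f(b)),
\]
so it suffices to show $d_{G_S}(a,f(a)) \le O(1)\cdot w_{G_S}(a,b)$, and symmetrically for $b$.

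To bound $d_{G_S}(a,f(a))$, invoke \Cref{lem:SPRCl2.7}, applied to the graph $G_S\setminus\cC$ on which \textsc{PlanarSPR} is run. This gives $d_{G_S\setminus\cC}(a,f(a)) \le O(1)\cdot d_{G_S\setminus\cC}(a,T)$. Note that removing the clusters does not change any distance between Steiner vertices and terminals. Each cluster hangs off its Steiner vertex by a single edge, so a shortest path between two such vertices never enters a cluster. Hence all these distances agree in $G_S$ and in $G_S\setminus\cC$. We therefore get $d_{G_S}(a,f(a)) \le O(1)\cdot d_{G_S}(a,t_a)$, where $t_a$ is the terminal closest to $a$ in $G_S$.

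Next apply Property 4 of \Cref{lem:g-with-steiner} to the edge $(a,b)$, whose endpoints are Steiner vertices. It gives $d_{G_S}(a,t_a) + d_{G_S}(t_b,b) \le O(1)\cdot w_{G_S}(a,b)$. Equivalently, one can cite \Cref{lem:deterministic-gamma} directly: $d_{G_S}(a,T) \le O(\gamma)$, where $\gamma = \max_{v\in C_a\cup C_b} d_G(v,T)$, and by construction $\gamma \le w(a,b)$. Combining this with the previous paragraph bounds both outer legs by $O(w_{G_S}(a,b))$, and summing the three terms gives the claim. The same argument covers the case where $a$ or $b$ is a terminal, since then $f$ is the identity on it and that leg vanishes.

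The main obstacle is the deterministic nature of the bound. \Cref{lem:SPRCl2.7} only relates $a$ to its own nearest-terminal distance, so we need $d_{G_S}(a,T)=O(w(a,b))$ to hold for every realization of the random clustering, not merely in expectation. This is precisely what \Cref{lem:deterministic-gamma}, through the shortcut-partition-based preserving paths, provides. The only other point to check carefully is the claim that deleting the hanging clusters leaves terminal–Steiner distances unchanged, which follows from the single-edge attachment described in Property 1.
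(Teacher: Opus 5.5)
Your proposal is correct and follows essentially the paper's argument: both bound $d_{G_S}(f(a),a)$ by $O(1)\cdot d_{G_S}(a,t_a)$ via \Cref{lem:SPRCl2.7}, and then bound $d_{G_S}(a,t_a)$ and $d_{G_S}(b,t_b)$ by $O(w_{G_S}(a,b))$ deterministically using Property 4 of \Cref{lem:g-with-steiner} (equivalently \Cref{lem:deterministic-gamma}). The only difference is that you route through the edge $(a,b)$ itself instead of the paper's walk $f(a)\rightsquigarrow a\rightsquigarrow t_a\rightsquigarrow t_b\rightsquigarrow b\rightsquigarrow f(b)$, which is slightly more economical since the paper bounds $d_{G_S}(t_a,t_b)$ through $(a,b)$ anyway; your remark that the pendant clusters leave Steiner-to-terminal distances unchanged correctly fills a step the paper leaves implicit.
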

\begin{proof}
    Let $t_a \in T$ (resp.\ $(t_b\in T)$ be the terminal vertex closest to $a$ (resp.\ $b$) in $G_S$. Consider the walk $W_{G_S}(f(a), f(b))  = W_{G_S}(f(a)\rightsquigarrow a \rightsquigarrow t_a \rightsquigarrow t_b \rightsquigarrow b \rightsquigarrow f(b))$ in $G_S$, obtained by concatenating the shortest paths between successive vertices $(f(a), a, t_a, t_b, b, f(b))$. We apply~\Cref{lem:SPRCl2.7} to get:
    \begin{align*}
        d_{G_S}(f(a), f(b)) &\leq \len(W_{G_S}(f(a), f(b))) \\
        &= d_{G_S}(f(a), a) + d_{G_S}(a, t_a) + d_{G_S}(t_a, t_b) + d_{G_S}(t_b, b) + d_{G_S}(b, f(b))\\
        &\leq O(1)\cdot d_{G_S}(a, t_a) + d_{G_S}(t_a, t_b) + O(1)\cdot d_{G_S}(t_b, b)\\
        &\leq O(1)\cdot (d_{G_S}(a, t_a) + d_{G_S}(t_a, t_b) + d_{G_S}(t_b, b))\\
        &\leq O(1)\cdot w_{G_S}(a, b)\quad\quad \text{by Property 4 of~\Cref{lem:g-with-steiner},}
    \end{align*}
    as claimed.
\end{proof}
\begin{proof}[of \Cref{lem:SPRB}] 
    \textbf{Constant factor distortion:} We prove that $d_{\hat{G}}(u, v) \leq O(1)\cdot d_{G_S}(u, v)$ for all $u,v \in V(G)$. Let $s_u$ (resp.\ $s_v$) be the Steiner vertex associated with the cluster $C_u$ (resp.\ $C_v$) containing $u$ (resp.\ $v$). 
    
    We now look at the shortest path $\pi_{G_S}(u, v) = (u\rightsquigarrow s_u - x_1-x_2-\ldots-x_L-s_v\rightsquigarrow v)$, where each $x_i$ is either a Steiner vertex or a vertex in $G_\text{out} \cup T$, and $(x_i, x_{i+1})\in E(G_S)$ for all $0\leq i\leq L$. Consider the corresponding walk in $\hat G$: $W_{\hat{G}}(u, v) = W_{\hat{G}}(u\rightsquigarrow f(s_u) \rightsquigarrow f(x_1) \rightsquigarrow\ldots \rightsquigarrow f(x_L) \rightsquigarrow f(s_v)\rightsquigarrow v)$, obtained by concatenating shortest paths $\pi_{\hat G}(u, f(s_u))\cup\{\pi_{\hat G}(x_i, x_{i+1})\}_{i=0}^L\cup \pi_{\hat G}(f(s_v), v)$, where we set $s_u := x_0$ and $s_v:=x_{L+1}$ for convenience. We have:
    \begin{align}
        d_{\hat{G}}(u, v) \leq \len(W_{\hat{G}}(u, v)) = \len(\pi_{\hat{G}}(u, f(x_0))) + \sum_{i=0}^L \len(\pi_{\hat{G}}(f(x_i), f(x_{i+1}))) + \len(\pi_{\hat{G}}(x_{L+1}, v))\label{eq:spr-ub-main-eqn}
    \end{align}
    Applying the weight setting $w_{\hat G}(r_a, f(s_a)) = d_{G_S}(s_a, f(s(a))$, and using~\Cref{lem:SPRCl2.7} and Property 4 of~\Cref{lem:g-with-steiner},
    \begin{align*}
        \len(\pi_{\hat G}(u, f(x_0)) &= d_{\hat G}(u, f(s_u)) = d_{G_S}(u, r_u) + w_{\hat G}(r_u, f(s_u)) \\
        &= d_{G_S}(u, r_u) + d_{G_S}(s_u, f(s_u))\\
        &\leq d_{G_S}(u, r_u) + O(1)\cdot d_{G_S}(s_u, t_{s_u})\\
        &\leq d_{G_S}(u, r_u) + O(1)\cdot w_{G_S}(x_0, x_1)
    \end{align*}
    Similarly, we can also bound \[\len(\pi_{\hat G}(f(x_{L+1}), v) \leq d_{G_S}(v, r_v) + O(1)\cdot w_{G_S}(x_L, x_{L+1})\] Moreover,~\Cref{lem:SPR-local} lets us bound \[\len(\pi_{\hat{G}}(f(x_i), f(x_{i+1}))) \leq O(1)\cdot w_{G_S}(x_i, x_{i+1})\] for all $0\leq i\leq L$. Putting it all together in Equation~\eqref{eq:spr-ub-main-eqn}, we have:
    \[d_{\hat G}(u, v) \leq d_{G_S}(u, r_u) + O(1)\cdot \sum_{i=0}^L w_{G_S}(x_i, x_{i+1}) + d_{G_S}(r_v, v) \leq O(1)\cdot d_{G_S}(u, v)\]
    This completes the proof of bounded distortion.

    \textbf{Non-contraction:} We now prove the non-contraction. Let $\bar\pi_{\hat{G}}(u, v)$ be the shortest path between $u$ and $v$ in $\hat{G}$. 
    By~\Cref{lem:SPRTh1.2}, we have
    \begin{align}
        d_{\hat{G}}(u, v) &= \len(\bar\pi_{\hat{G}}(u, v)) = d_{\hat{G}}(u, f(s_u)) + d_{\hat{G}}(f(s_u), f(s_v)) + d_{\hat{G}}(f(s_v), v)\nonumber\\
        &\geq (d_{G_S}(u, r_u) + d_{G_S}(r_u, f(s_u))) + d_{G_S}(f(s_u), f(s_v)) + (d_{G_S}(f(s_v), r_v) + d_{G_S}(r_v, v))\label{eq:spr_lb_1}
    \end{align}
    Now let $\hat\pi_{G_S}(u, v)$ be the shortest path between $u$ and $v$ in $G_S$. By triangle inequality,
    \begin{align}
        d_{G_S}(u, v) &= \len(\hat\pi_{G_S}(u, v)) = d_{G_S}(u, s_u) + d_{G_S}(s_u, s_v) + d_{G_S}(s_v, v) \nonumber\\
        &\leq d_{G_S}(u, r_u) + (d_{G_S}(s_u, f(s_u)) + d_{G_S}(f(s_u), f(s_v)) + d_{G_S}(f(s_v), s_v)) + d_{G_S}(r_v, v)\label{eq:spr_lb_2}  
    \end{align}
    From Equations~\eqref{eq:spr_lb_1} and~\eqref{eq:spr_lb_2}, we have:
    \[d_{\hat G}(u, v) \geq d_{G_S}(u, v)\]
    This completes the proof of~\Cref{lem:SPRB}.
\end{proof}

\section{Extension to $k$-Shell Instances}

We now consider multiflow instances defined by a set of supply-demand terminal pairs, denoted by $\mathfrak{T} = \{(s_1, t_1), \ldots, (s_{\alpha}, t_{\alpha})\}$ on a graph $G$. 
Every edge $e$ of $G$ is associated with some capacity $c(e)$, and each terminal pair $(s_i, t_i)$ is associated with a demand value $d_i$.
The \EMPH{maximum concurrent flow} is the maximum value $\alpha$ such that $\alpha \cdot d_i$ units of flow can be simultaneously routed from every $s_i$ to $t_i$ (the flows need not be integral) while respecting capacity constraints. The \EMPH{sparsest cut} is the minimum, over all cuts, of the ratio of the capacity of the cut to the demand of the cut; formally, the sparsest cut is
$$\min_{S \subset V(G)} \frac{\sum_{\text{edge $e$ between $S$ and $\bar S$}} c(e)}{\sum_{s_i \in S, t_i \not \in S} d_i}.$$
We are interested in the (fractional) flow-cut gap: the ratio between the value of the sparsest cut and the value of maximum concurrent flow.
We consider \EMPH{$k$-shell instances}, where at least one terminal of each pair is guaranteed to lie within the outer $k$ layers of $G$. 
We prove a tree embedding theorem for these instance, which immediately implies that the fractional flow-cut gap is at most $\poly(k)$ \cite{LLR95, AY98}.

\begin{theorem}\label{thm:kshell}
    Let $G$ be a planar graph with $\mathcal{L} = \{L_1,L_2,\ldots, L_\gamma\}$ the layer decomposition\footnote{that is, $L_1$ is the set of vertices on the outermost face, and $L_i$ is the set of vertices on the outermost face of the graph obtained after removing layers $L_1,\ldots, L_{i-1}$} of $G$. Let $\mathfrak{T} = \{(s_1, t_1),\ldots, (s_\alpha, t_\alpha)\}$ be a set of terminal pairs, such that at least one of each $(s_i, t_i)$ pair lies in the outer $k$ layers.
    Then the shortest path metric of $G$ admits an embedding into a distribution over trees $T^*$ such that
    for each $(s, t)\in \mathfrak{T}$,
    \[d_G(s, t) \leq d_{T^*}(s, t) \quad \text{and} \quad \E[d_{T^*}(s, t)] \leq \poly(k)\cdot d_G(s, t).\]
\end{theorem}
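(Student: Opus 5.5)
The idea is to reduce the $k$-shell instance to a genuine $O(k)$-outerplanar situation by collapsing everything inside the outer $k$ layers, and then to reuse the machinery already developed. Write $V_{\mathrm{out}} = L_1\cup\cdots\cup L_k$. Let $G^1,\dots,G^L$ be the connected components of $G[V\setminus V_{\mathrm{out}}]$; each lies in a face of $G[V_{\mathrm{out}}]$. For each $G^j$, take as terminal set $T^j$ its outer face (a subset of $L_{k+1}$). I would handle the inside of $G^j$ exactly as \textsc{EmbedKOuterplanar} does, except that the clusters need not be recursed on with bounded outerplanarity.

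Concretely, the first step is to apply \textsc{LowDiameterDecomposition}$(G^j,T^j)$ and then \textsc{SteinerPointRemoval}. Neither \Cref{lem:g-with-steiner} nor \Cref{lem:SPRB} actually uses the outerplanarity bound of $G^j$: the rings, \Cref{lem:ccclpp}, \Cref{lem:deterministic-gamma}, and the SPR black boxes \Cref{lem:SPRTh1.2} and \Cref{lem:SPRCl2.7} all work for arbitrary planar graphs with $T^j$ on the outer face. The output $\hat H^j$ therefore consists of the cycle $T^j$ with chords, plus clusters hanging off terminals by single edges, and it satisfies $d_{G^j}\le d_{\hat H^j}$ and $\mathbb{E}[d_{\hat H^j}]\le O(1)\,d_{G^j}$.

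The second step is to replace each $G^j$ by $\hat H^j$ with its clusters deleted. Call the result $\bar G$; it is $(k+1)$-outerplanar. The argument is then the path-splitting argument from the proof of \Cref{thm:tree-embed}, verbatim. For a pair $(s,t)$, split a shortest $s$--$t$ path into maximal pieces inside or outside the $G^j$. This gives $\mathbb{E}[d_{H}(s,t)]\le O(1)\,d_G(s,t)$ and non-contraction, where $H$ is $\bar G$ with the clusters reattached. This holds for every pair of vertices of $H$, including the terminal pairs.

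The third step is to embed. Apply \Cref{thm:tree-embed} to $\bar G$ to get trees with expected distortion $f(k+1)=\poly(k)$. For each cluster $C$, by \Cref{lem:edge-sep} it suffices to embed $C$ into trees with any non-contracting embedding, for instance via an arbitrary (e.g.\ FRT) embedding, since gluing along a cut edge only matters for pairs with an endpoint inside $C$. This is where the shell condition enters. Each demand pair has one endpoint, say $s$, in $V_{\mathrm{out}}\subseteq V(\bar G)$. If $t$ lies in a cluster $C$ hanging off terminal $x$, then every $s$--$t$ path in $H$ passes through the cut edge at $x$ and then a path inside $C$. So I would not embed $C$ by recursion at all. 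Instead I would embed $C\cup\{x\}$ into a single shortest-path tree rooted at $x$ (restricted to $C$). This preserves $d_H(x,t)$ exactly and is non-contracting for pairs that cross the cut edge. Then $d_{T^*}(s,t)=d_{T'}(s,x)+d_H(x,t)$, and with $d_H(s,x)+d_H(x,t)=d_H(s,t)$ we get $\mathbb{E}[d_{T^*}(s,t)]\le\poly(k)\,\mathbb{E}[d_H(s,t)]\le\poly(k)\,d_G(s,t)$. Non-contraction follows from $d_{T'}\ge d_{\bar G}\ge d_H$ on $V(\bar G)$, together with exactness along the cut edge.

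The main obstacle is the non-contraction within a single tree-glued cluster. The shortest-path tree only guarantees non-contraction for pairs through the root, and pairs inside the same cluster can be contracted. Since the theorem only constrains demand pairs, and no demand pair has both endpoints in the interior (one endpoint is always in $V_{\mathrm{out}}$), this is acceptable. I would double-check two points. First, that the statement indeed only requires the guarantees on $\mathfrak{T}$; otherwise I would instead recursively embed clusters with an $O(\log n)$ tree embedding and check that this does not affect the shell pairs. Second, that the SPR lemmas indeed need no bound on the outerplanarity of $G^j$.
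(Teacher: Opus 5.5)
Your proposal is correct and is essentially the paper's argument. Both run the \textsc{LowDiameterDecomposition}/\textsc{SteinerPointRemoval} machinery on the inner core with its boundary layer as terminals, replace each hanging cluster by a tree rooted at its attachment terminal that preserves root distances exactly, and invoke \Cref{thm:tree-embed} on the resulting $O(k)$-outerplanar graph. The shell condition guarantees that no demand pair has both endpoints in one cluster. The differences are cosmetic: you use a shortest-path tree where the paper uses a \textsc{MakeStar} star, you glue after the embedding rather than before, and you take $L_{k+1}$ rather than $L_k$ as the terminal layer.

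One small correction to your ``main obstacle'': a shortest-path tree of $C\cup\{x\}$ is a subgraph of $H$, so it never contracts any pair, even inside $C$. What is lost for intra-cluster pairs is only the expansion bound, which the theorem does not ask for.
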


\begin{corollary}
    For any $k$-shell multiflow instance, the fractional flow-cut gap is bounded by $\poly(k)$.
\end{corollary}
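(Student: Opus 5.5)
The plan is to peel only the part of the graph beyond layer $k$ and leave the outer $k$ layers essentially intact. Set $T = L_k$ and let $G_k^1,\dots,G_k^L$ be the connected components of $G[V\setminus \bigcup_{l<k} L_l]$; each has outer face $T^j = G_k^j\cap L_k$ but may have arbitrarily many layers. For each $j$ we run $\textsc{EmbedKOuterplanar}(G_k^j,T^j)$. Its proof (\Cref{lem:g-with-steiner} together with \Cref{lem:SPRB}) never uses the number of layers of the input: the ring construction, the shortcut partition of \Cref{lem:ccclpp} (valid for all planar graphs), the Steiner graph, and the planar SPR of \cite{chang2024shortcut} all work for arbitrary plane graphs with the outer face as terminal set. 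So we obtain $\hat H^j$: the face $T^j$, chords on $T^j$, and clusters $C\in\cC^j$, each hanging from a terminal by a single edge. The distances satisfy $d_{G_k^j}\le d_{\hat H^j}$ and $\E[d_{\hat H^j}]\le O(1)\, d_{G_k^j}$.

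Next, glue. Let $\bar G$ be $G$ with each $G_k^j$ replaced by $\hat H^j\setminus\cC^j$. Then $\bar G$ is $k$-outerplanar: every vertex of $\hat H^j\setminus \cC^j$ lies on $T^j\subseteq L_k$, and the chords are drawn inside the face bounded by $T^j$, so planarity and the outer $k-1$ layers are preserved. The splitting-into-subpaths argument from the proof of \Cref{thm:tree-embed} shows $d_G\le d_{\bar G\cup\bigcup\cC}$ and $\E[d_{\bar G\cup\bigcup\cC}(x,y)]\le O(1)\,d_G(x,y)$ for every pair. Apply \Cref{thm:tree-embed} to $\bar G$ to get a tree $T'$ with expected distortion $\poly(k)$. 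Each cluster $C$ is attached by a cut edge, so we embed it into a tree arbitrarily; a BFS/shortest-path tree from its root is dominating. Then glue these trees to $T'$ via \Cref{lem:edge-sep}-style gluing, giving $T^*$. Non-contraction holds for every pair because every step is dominating.

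The crux is the expansion for a demand pair $(s,t)$ with, say, $s\in L_1\cup\dots\cup L_k$. The cluster trees may have huge distortion internally, so we must show the tree distance $d_{T^*}(s,t)$ only uses a cluster tree trivially. If $t$ lies in a cluster $C$ with root $r$ attached to terminal $t'$, then $d_{T^*}(s,t)=d_{T'}(s,t')+w(t',r)+d_{T_C}(r,t)$. I would pick $T_C$ as a shortest-path tree from $r$ in $G[C]$, so that $d_{T_C}(r,t)=d_{G[C]}(r,t)$. The remaining worry is that this can exceed $d_G(r,t)$. To handle it, I would argue that every path from $s$ (outside $C$) to $t$ in $\bar G\cup\bigcup\cC$ passes through $t'$ and $r$. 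Hence $d_{T_C}(r,t)\le d_{G[C]}(r,t)\le d_{\bar G\cup\bigcup\cC}(s,t)$, whose expectation is $O(1)\,d_G(s,t)$. The terms $d_{T'}(s,t')+w(t',r)$ are likewise bounded by $\poly(k)$ times $d_{\bar G\cup\bigcup\cC}(s,t)$ via \Cref{thm:tree-embed} and the cut-edge structure.

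Taking expectations over both independent stages (conditioning on the first) gives $\E[d_{T^*}(s,t)]\le \poly(k)\cdot O(1)\, d_G(s,t)$. The main obstacle I expect is this cut-vertex argument: making precise that the single-edge attachment forces all $s$–$t$ paths through $r$, and that using a shortest-path tree in $G[C]$ costs nothing relative to the intermediate graph.
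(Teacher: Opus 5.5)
Your construction matches the paper's proof of \Cref{thm:kshell}: peel at $L_k$, run \textsc{EmbedKOuterplanar} on the inner components, treat clusters as dead ends hanging by cut edges, and embed the $k$-outerplanar remainder with \Cref{thm:tree-embed}. Using shortest-path trees instead of \textsc{MakeStar}, and gluing after the recursive call instead of calling \textsc{TreeEmbed} on $G'$, are cosmetic, and your demand-pair bound is correct.

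\textbf{The gap.} The statement is about the flow-cut gap, and you never pass from the embedding to flows and cuts. The paper does this by LP duality \cite{LLR95,AY98}. Write $c(e)$ for capacities, $D_i$ for demands, and $\ell$ for the optimal dual edge lengths. The duality argument then needs two things:
\begin{itemize}
\item non-contraction on demand pairs, $d_T(s_i,t_i)\ge d_\ell(s_i,t_i)$;
\item bounded expected expansion on every edge of $G$, $\E[\sum_e c(e)\,d_T(e)]\le\poly(k)\sum_e c(e)\,\ell(e)$.
\end{itemize}
Expansion of demand pairs, which you identify as the crux, plays no role. Expansion of capacity edges is exactly what your embedding leaves uncontrolled. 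A shortest-path tree of $G[C]$ from $r$ can stretch a non-tree edge $(u,v)$ of a ring-$R_i$ cluster to length of order $2^i$, independent of $w(u,v)$; think of the edge a BFS tree drops from a cycle through $r$. An edge at depth $\gamma\gg w(u,v)$ lies inside a single cluster with probability $1-O(w(u,v)/\gamma)$, so no bound on its expected stretch follows. The paper's stars are worse: every edge inside a ring-$R_i$ cluster gets length $\Omega(2^i)$. So the one-line appeal to \cite{LLR95,AY98} after \Cref{thm:kshell} as stated has the same hole.

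\textbf{The repair is local.} Replace $T_C$ by a path hanging from $t'$, on which each $u\in C$ sits at distance $h(u)=w(t',r)+d_{G[C]}(r,u)$ from $t'$.
\begin{itemize}
\item Since $|h(u)-h(v)|\le d_{G[C]}(u,v)$, cluster edges are not stretched beyond their length in $\bar G\cup\bigcup\cC$.
\item Distances from any vertex outside $C$ are the same as under your shortest-path tree, so your non-contraction and cut-edge arguments for demand pairs go through verbatim.
\item The only pairs that contract lie inside a single cluster. Such a pair is never a demand pair, because clusters sit in layers beyond $L_k$.
\end{itemize}
Combine this with \Cref{lem:embed-outerplanar} (which covers all pairs of $\hat H^j$), \Cref{thm:tree-embed} applied to $\bar G$, and the cut-edge gluing. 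This gives $\E[d_{T^*}(u,v)]\le\poly(k)\,\ell(u,v)$ for every edge $(u,v)$ of $G$, while $d_{T^*}(s_i,t_i)\ge d_\ell(s_i,t_i)$ holds deterministically.

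Averaging then gives a tree $T$ in the support for which the ratio of $\sum_e c(e)\,d_T(e)$ to $\sum_i D_i\,d_T(s_i,t_i)$ is at most $\poly(k)$ times the maximum concurrent flow. Writing $d_T$ as a nonnegative combination of cut metrics yields a cut of at most that sparsity. You should also state explicitly that the whole construction is run on $G$ reweighted by $\ell$. This is legitimate because the layers are combinatorial and \Cref{thm:tree-embed} allows arbitrary edge weights.
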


At a high level, our approach is the following: we isolate the inner core of the graph (i.e.\ layers $L_{k+1}, \dots$), and apply the $\textsc{PlanarSPR}$ procedure, treating the $k^{\text{th}}$ layer as the terminal set $T$. This transformation yields a $k$-outerplanar graph, with some potentially large clusters hanging off the vertices of $T$. Every terminal $(s,t)$ path contains at least one of $s$ or $t$ in the outer $k$ layers. We embed all these clusters into star graphs with edge weights equal to the distance of the vertex from its root. We can safely execute this since it is guaranteed that both vertices of a terminal pair do not lie within the same cluster. This is now a $k$-outerplanar graph with $O(1)$ distortion. We can finally embed this graph into random trees using our $\poly(k)$ distortion algorithm.
The formal algorithm is described below:

\begin{figure}[h!]
\small
\centering
\begin{algorithm}
% \internallinenumbers
\textul{$\textsc{K-ShellEmbedToTree}(G)$}: \+
\\ if $G$ is $k$-outerplanar: \+
\\ return \textsc{TreeEmbed}(G) \-

\\ $\{L_1,\ldots,L_\gamma\} \gets$ layers of $G$ (where $L_1$ is the outer face)
\\ $G_{in}^1, \ldots, G_{in}^\beta \gets$ connected components of $G[\bigcup_{i=k}^\gamma L_i]$\\
$T^i \gets L_k^i$ for each $i\in [\beta]$

\\ for $i \gets 1$ to $\beta$: \+
\\ $(H_{in}^i, \mathcal{C}^i)\gets \textsc{EmbedKOuterplanar}(G_{in}^i, T^i)$
\\ for each $C\in \mathcal{C}^i$: \+
\\ $\textit{rt} \gets$ the terminal vertex associated with $C$.
\\ $H_{in}^i \gets$ \textsc{MakeStar}($H_{in}^i, C, \textit{rt}$) \-\-
\\ $G' \gets$ the $k$-outerplanar graph formed by attaching $\{H_{in}^i\}_{i=1}^\beta$ with $G[\bigcup_{i=1}^{k-1} L_i]$.
\\ return \textsc{TreeEmbed}($G'$)
\end{algorithm}
\label{fig:kshell_alg}
\end{figure}

In the above procedure, let $\hat H$ denote the intermediate graph formed just before the $\textsc{MakeStar}$ subroutine is invoked. The \EMPH{$\textsc{MakeStar}$} procedure converts each cluster $C$ into a star graph. Specifically, for every vertex $c\in C$, it creates a direct edge to the designated root terminal vertex $\textit{rt}$ with weight $d_{\hat H}(rt, c)$.

\begin{proof}[of \Cref{thm:kshell}]
    We prove that the distortion faced by any $(s, t)$ pair just before the final $\textsc{TreeEmbed}$ step is $O(1)$. Let $G_\text{out} = G[\bigcup_{i=1}^k L_i]$.
    
    \medskip \noindent \textbf{\boldmath Case 1: $s\in G_{\text{out}}, t\in G_{\text{in}}$.} Let $P=P_{G'}(s, t)$ be the shortest path between $s$ and $t$ in $G'$. Let $x$ be the first vertex on $L_k$ when traveling from $s$ to $t$ via $P$. Clearly, $d_{G'}(s,x) = d_G(s, x)$ since all edges on this path belong to $G_{\text{out}}$, and, hence, are untouched. We invoke~\Cref{lem:embed-outerplanar} to bound $d_G(x, t) \leq \E[d_{G'}(x, t)] \leq O(1)\cdot d_G(x, t)$. 

    \medskip \noindent \textbf{\boldmath Case 2: Both $s,t\in G_{\text{out}}$.} The argument is similar to Case 1 here. Let $P$ be the shortest $s$ to $t$ path in $G'$. If $P$ does not intersect $G_{\text{in}}$, we are done since this path is untouched by the algorithm so far, and faces no distortion. Otherwise, let $x_a$ and $x_b$ be the first and last vertices on $L_k$ when traveling from $s$ to $t$ via~$P$. 
    By the previous argument, we know that $d_{G'}(s, x_a) = d_G(s, x_a)$ and $d_{G'}(x_b, t) = d_G(x_b, t)$. Moreover, from~\Cref{lem:embed-outerplanar}, we have $d_G(x_a, x_b) \leq \E[d_{G'}(x_a, x_b)] \leq O(1) \cdot d_G(x_a, x_b)$.

    \medskip
    The above two cases imply that for every $(s, t)\in \mathfrak{T}$, we have:
    \[d_G(s, t)\leq \E[d_{G'}(s,t)]\leq O(1)\cdot d_G(s, t)\]
    The resulting graph $G'$ is $k$-outerplanar:~\Cref{lem:embed-outerplanar} claims that each $H_{in}^i$ is a face with some chords connecting its vertices. Moreover, the $\textsc{MakeStar}$ procedure converts the attached clusters into trees, so all these vertices also become part of this outer face. When attached with $G_{\text{out}}\setminus G[L_k]$, this component forms the final $k^{th}$ outerplanar layer of $G'$.

    Finally, we use~\Cref{thm:tree-embed} to probabilistically embed $G'$ into random tree $T^*$ with expected distortion $\poly(k)$. For each $(s, t)$, we have:
    \[\E[d_{T^*}(s, t) \mid G'] \leq \poly(k) \cdot d_{G'}(s, t)\]
    implying
    \[\E[d_{T^*}(s, t)] \leq \poly(k) \cdot \E[d_{G'}(s, t)]
        \leq \poly(k) \cdot d_G(s, t)\]
    The non-contraction follows similarly. This finishes the proof.
\end{proof}

\subsection*{AI Disclosure}
We used ChatGPT 5.5 and Gemini 3.1 for grammar, styling, and editing proofs for notational consistency.
%%%%%%%%%%%%%%%%%%%%%%%%%%%%%%%%%%%%%%%%%%%
%%%%%%%%%%%%%%%%%%%%%%%%%%%%%%%%%%%%%%%%%%%
\bibliographystyle{plain}
\bibliography{ref}

@article{CGNRS06,
author = {Chekuri, Chandra and Gupta, Anupam and Newman, Ilan and Rabinovich, Yuri and Sinclair, Alistair},
title = {Embedding $k$-Outerplanar Graphs into $\ell_1$},
journal = {SIAM Journal on Discrete Mathematics},
volume = {20},
number = {1},
pages = {119-136},
year = {2006},
doi = {10.1137/S0895480102417379},
note = {Announced at SODA '03}
}

@inproceedings{gnrs99,
author = {Gupta, Anupam and Sinclair, Alistair and Newman, Ilan and Rabinovich, Yuri},
title = {Cuts, Trees and $\ell_1$-Embeddings of Graphs},
year = {1999},
isbn = {0769504094},
publisher = {IEEE Computer Society},
address = {USA},
booktitle = {Proceedings of the 40th Annual Symposium on Foundations of Computer Science},
pages = {399},
series = {FOCS '99}
}

@article{GNRS04,
  title = {Cuts,  Trees and $\ell_1$-Embeddings of Graphs},
  volume = {24},
  DOI = {10.1007/s00493-004-0015-x},
  number = {2},
  journal = {Combinatorica},
  author = {Gupta,  Anupam and Newman,  Ilan and Rabinovich,  Yuri and Sinclair,  Alistair},
  year = {2004},
  pages = {233–269},
 note = {Announced at FOCS '99}
}

@inproceedings{Sidiropoulos2013,
  title={Non-positive curvature and the planar embedding conjecture},
  author={Sidiropoulos, Anastasios},
  booktitle={2013 IEEE 54th Annual Symposium on Foundations of Computer Science},
  pages={177--186},
  year={2013},
  series ={FOCS '13}
}

@article{OS81,
  title={Multicommodity flows in planar graphs},
  author={Okamura, Haruko and Seymour, Paul D},
  journal={Journal of Combinatorial Theory, Series B},
  volume={31},
  number={1},
  pages={75--81},
  year={1981}
}

@inproceedings{KLR19,
  title={Flow-cut gaps and face covers in planar graphs},
  author={Krauthgamer, Robert and Lee, James R and Rika, Havana},
  booktitle={Proceedings of the 30th Annual ACM-SIAM Symposium on Discrete Algorithms},
  pages={525--534},
  year={2019},
  series = {SODA '19}
}

@inproceedings{Filtser20,
  title={A face cover perspective to $\ell_1$ embeddings of planar graphs},
  author={Filtser, Arnold},
  booktitle={Proceedings of the 14th Annual ACM-SIAM Symposium on Discrete Algorithms},
  pages={1945--1954},
  year={2020},
  series = {SODA '20}
}

@article{Emek11,
  title={$k$-{O}uterplanar graphs, planar duality, and low stretch spanning trees},
  author={Emek, Yuval},
  journal={Algorithmica},
  volume={61},
  number={1},
  pages={141--160},
  year={2011},
  note = {Announced at ESA '09}
}

@article{LR09,
  title = {Coarse Differentiation and Multi-flows in Planar Graphs},
  volume = {43},
  DOI = {10.1007/s00454-009-9172-4},
  number = {2},
  journal = {Discrete \& Computational Geometry},
  author = {Lee,  James R. and Raghavendra,  Prasad},
  year = {2009},
  pages = {346–362}
}

@InProceedings{KK25,
  author =	{Kalantarzadeh, Sina and Kumar, Nikhil},
  title =	{Improved Lower Bounds on Multiflow-Multicut Gaps},
  booktitle =	{Approximation, Randomization, and Combinatorial Optimization. Algorithms and Techniques},
  pages =	{14:1--14:18},
  series =	{APPROX/RANDOM '20'25},
  year =	{2025},
  volume =	{353},
  doi =		{10.4230/LIPIcs.APPROX/RANDOM.2025.14},
note = {Arxiv version \url{https://arxiv.org/pdf/2507.06576}}
}

@inproceedings{CJLV08,
  title={Embeddings of topological graphs: Lossy invariants, linearization, and 2-sums},
  author={Chakrabarti, Amit and Jaffe, Alexander and Lee, James R and Vincent, Justin},
  booktitle={The 49th Annual IEEE Symposium on Foundations of Computer Science},
  pages={761--770},
  year={2008},
  series = {FOCS '08}
}

@article{AFGN22,
  title={Metric embedding via shortest path decompositions},
  author={Abraham, Ittai and Filtser, Arnold and Gupta, Anupam and Neiman, Ofer},
  journal={SIAM Journal on Computing},
  volume={51},
  number={2},
  pages={290--314},
  year={2022},
  publisher={SIAM}
}

@article{CSW13,
  title={Flow-cut gaps for integer and fractional multiflows},
  author={Chekuri, Chandra and Shepherd, F Bruce and Weibel, Christophe},
  journal={Journal of Combinatorial Theory, Series B},
  volume={103},
  number={2},
  pages={248--273},
  year={2013},
  note = {Announced at SODA '10}
}

@inproceedings{LS09,
  title={On the geometry of graphs with a forbidden minor},
  author={Lee, James R and Sidiropoulos, Anastasios},
  booktitle={Proceedings of the 41st annual ACM Symposium on Theory of computing},
  pages={245--254},
 year={2009},
  series = {STOC '09}
}

@article{Kumar25,
  title = {An Approximate Generalization of the Okamura–Seymour Theorem},
  volume = {54},
  DOI = {10.1137/23m1545938},
  number = {5},
  journal = {SIAM Journal on Computing},
  author = {Kumar,  Nikhil},
  year = {2025},
  pages = {FOCS22--159--FOCS22--177}
}

@inproceedings{LP13,
author = {Lee, James R. and Poore, Daniel E.},
title = {On the 2-sum embedding conjecture},
year = {2013},
doi = {10.1145/2462356.2492436},
booktitle = {Proceedings of the 29th Annual Symposium on Computational Geometry},
pages = {197–206},
numpages = {10},
series = {SoCG '13}
}

@inproceedings{CG04,
  author    = {Douglas E. Carroll and
               Ashish Goel},
  title     = {Lower Bounds for Embedding into Distributions over Excluded Minor
               Graph Families},
  booktitle = {Algorithms -- {ESA} 2004, 12th Annual European Symposium, Bergen, Norway,
               September 14-17, 2004, Proceedings},
  pages     = {146--156},
  year      = {2004},
  url       = {https://doi.org/10.1007/978-3-540-30140-0\_15},
  doi       = {10.1007/978-3-540-30140-0\_15}
  }

@InProceedings{Bartal96,
  Title     = {Probabilistic Approximations of Metric Spaces and Its Algorithmic
 Applications},
  Author    = {Yair Bartal},
  Booktitle = {37th Annual Symposium on Foundations of Computer Science, {FOCS} '96,
 Burlington, Vermont, USA, 14-16 October, 1996},
  Year      = {1996},
  Pages     = {184--193},
  DOI       = {10.1109/SFCS.1996.548477}
}

@inproceedings{Bartal98,
  author    = {Yair Bartal},
  title     = {On Approximating Arbitrary Metrices by Tree Metrics},
  booktitle = {Proceedings of the Thirtieth Annual {ACM} Symposium on the Theory
               of Computing, Dallas, Texas, USA, May 23-26, 1998},
  pages     = {161--168},
  year      = {1998},
  doi       = {10.1145/276698.276725}
  }

@article{FRT04,
  author    = {Jittat Fakcharoenphol and
               Satish Rao and
               Kunal Talwar},
  title     = {A tight bound on approximating arbitrary metrics by tree metrics},
  journal   = {J. Comput. Syst. Sci.},
  volume    = {69},
  number    = {3},
  pages     = {485--497},
  year      = {2004},
  url       = {https://doi.org/10.1016/j.jcss.2004.04.011},
  doi       = {10.1016/j.jcss.2004.04.011},
  bibsource = {dblp computer science bibliography, https://dblp.org}
}

@article{Bourgain85,
  title={On {L}ipschitz embedding of finite metric spaces in {H}ilbert space},
  author={Bourgain, Jean},
  journal={Israel Journal of Mathematics},
  volume={52},
  number={1},
  pages={46--52},
  year={1985}
}

@article{LLR95,
  title={The geometry of graphs and some of its algorithmic applications},
  author={Linial, Nathan and London, Eran and Rabinovich, Yuri},
  journal={Combinatorica},
  volume={15},
  number={2},
  pages={215--245},
  year={1995},
  publisher={Springer}
}

@article{AAABN06,
author = {Alon, Noga and Awerbuch, Baruch and Azar, Yossi and Buchbinder, Niv and Naor, Joseph (Seffi)},
title = {A general approach to online network optimization problems},
year = {2006},
volume = {2},
number = {4},
url = {https://doi.org/10.1145/1198513.1198522},
doi = {10.1145/1198513.1198522},
journal = {ACM Trans. Algorithms},
month = oct,
pages = {640–660},
numpages = {21}
}

@article{LR99,
  title={Multicommodity max-flow min-cut theorems and their use in designing approximation algorithms},
  author={Leighton, Tom and Rao, Satish},
  journal={Journal of the ACM},
  volume={46},
  number={6},
  pages={787--832},
  year={1999}
}

@article{AY98,
  title = {An ${O}(\log k)$ Approximate Min-Cut Max-Flow Theorem and Approximation Algorithm},
  volume = {27},
  DOI = {10.1137/s0097539794285983},
  number = {1},
  journal = {SIAM Journal on Computing},
  publisher = {Society for Industrial & Applied Mathematics (SIAM)},
  author = {Aumann,  Yonatan and Rabani,  Yuval},
  year = {1998},
  pages = {291–301}
}

@inproceedings{Rao99,
  title={Small distortion and volume preserving embeddings for planar and Euclidean metrics},
  author={Rao, Satish},
  booktitle={Proceedings of the 15th annual symposium on Computational geometry},
  pages={300--306},
  year={1999},
series = {SoCG '99}
}

@article{AGGNT19,
  title={Cops, robbers, and threatening skeletons: Padded decomposition for minor-free graphs},
  author={Abraham, Ittai and Gavoille, Cyril and Gupta, Anupam and Neiman, Ofer and Talwar, Kunal},
  journal={SIAM Journal on Computing},
  volume={48},
  number={3},
  pages={1120--1145},
  year={2019},
  publisher={SIAM}
}

@inproceedings{KLMN04,
  title={Measured descent: A new embedding method for finite metrics},
  author={Krauthgamer, Robert and Lee, James R and Mendel, Manor and Naor, Assaf},
  booktitle={45th Annual IEEE Symposium on Foundations of Computer Science},
  pages={434--443},
  year={2004},
  organization={IEEE}
}

@article{RS03,
  title={Graph minors. {XVI}. {E}xcluding a non-planar graph},
  author={Robertson, Neil and Seymour, Paul D},
  journal={Journal of Combinatorial Theory, Series B},
  volume={89},
  number={1},
  pages={43--76},
  year={2003},
  publisher={Elsevier}
}

@article{Baker94,
  title={Approximation algorithms for NP-complete problems on planar graphs},
  author={Baker, Brenda S},
  journal={Journal of the ACM},
  volume={41},
  number={1},
  pages={153--180},
  year={1994},
  publisher={ACM New York, NY, USA}
}

@InProceedings{Kummar20,
  author =	{Kumar, Nikhil},
  title =	{{Multicommodity Flows in Planar Graphs with Demands on Faces}},
  booktitle =	{31st International Symposium on Algorithms and Computation},
  pages =	{41:1--41:11},
  year =	{2020},
  doi =		{10.4230/LIPIcs.ISAAC.2020.41},
   series = {ISAAC '20}
}

@article{Dorn10,
  title = {Dynamic programming and planarity: Improved tree-decomposition based algorithms},
  volume = {158},
  DOI = {10.1016/j.dam.2009.10.011},
  number = {7},
  journal = {Discrete Applied Mathematics},
  author = {Dorn,  Frederic},
  year = {2010},
  month = apr,
  pages = {800–808}
}

@article{DPBF10,
author = {Dorn, Frederic and Penninkx, Eelko and Bodlaender, Hans L. and Fomin, Fedor V.},
title = {Efficient Exact Algorithms on Planar Graphs: Exploiting Sphere Cut Decompositions},
year = {2010},
volume = {58},
number = {3},
doi = {10.1007/s00453-009-9296-1},
journal = {Algorithmica},
month = nov,
pages = {790–810},
numpages = {21}
}

@inproceedings{Gutap01,
  title={Steiner points in tree metrics don't (really) help},
  author={Gupta, Anupam},
  booktitle={Proceedings of the 12nd annual ACM-SIAM Symposium on Discrete algorithms},
  pages={220--227},
  year={2001},
 series = {SODA '01}
}

@article{LS13,
  title={Pathwidth, trees, and random embeddings},
  author={Lee, James R and Sidiropoulos, Anastasios},
  journal={Combinatorica},
  volume={33},
  number={3},
  pages={349--374},
  year={2013}
}

@inproceedings{CCCLPP25,
  title={Embedding Planar Graphs into Graphs of Treewidth ${O}(\log^3 n)$},
  author={Chang, Hsien-Chih and Cohen-Addad, Vincent and Conroy, Jonathan and Le, Hung and Pilipczuk, Marcin and Pilipczuk, Micha{\l}},
  booktitle={Proceedings of the 2025 Annual ACM-SIAM Symposium on Discrete Algorithms (SODA)},
  pages={88--123},
  year={2025},
  organization={SIAM}
}

@inproceedings{chang2024shortcut,
  title={Shortcut partitions in minor-free graphs: Steiner point removal, distance oracles, tree covers, and more},
  author={Chang, Hsien-Chih and Conroy, Jonathan and Le, Hung and Milenkovi{\'c}, Lazar and Solomon, Shay and Than, Cuong},
  booktitle={Proceedings of the 2024 Annual ACM-SIAM Symposium on Discrete Algorithms (SODA)},
  pages={5300--5331},
  year={2024},
  organization={SIAM},
  note={\url{https://arxiv.org/abs/2308.00555v1}}
}

\end{document}